\let\shortcite\cite
\let\shortcite\cite
\newcommand{\rp}{{\rm RP}}
\newcommand{\p}{{\rm P}}
\newcommand{\np}{{\rm NP}}
\newcommand{\elec}{\ensuremath{\cal E}}
\newtheorem{theorem}{Theorem}
\newtheorem{corollary}{Corollary}
\newcommand\qedblob{\ding{113}}
\def\literalqed{{\ \nolinebreak\hfill\mbox{\qedblob\quad}}}
\def\qed{\literalqed}
\newcommand{\score}[1]{\ensuremath{{\rm score}(#1)}}
\newcommand{\prob}[3]{
\begin{description}
  \item[Name:] #1
  \item[Given:] #2
  \item[Question:] #3
\end{description}}
\begin{document}
\sloppy

\title{Complexity of Conformant Election Manipulation}

\author{Zack Fitzsimmons\\
 Dept.\ of Math.\ and Computer Science\\
 College of the Holy Cross\\
 Worcester, MA 01610 \and
  Edith Hemaspaandra\\
  Department of Computer Science\\
  Rochester Institute of Technology \\
  Rochester, NY 14623}%

\date{July 21, 2023}
\maketitle

\begin{abstract}
It is important to study how strategic agents can affect the outcome of an election. There has been a long line of research in the computational study of elections on the complexity of manipulative actions such as manipulation and bribery. These problems model scenarios such as voters casting strategic votes and agents campaigning for voters to change their votes to make a desired candidate win. A common assumption is that the preferences of the voters follow the structure of a domain restriction such as single peakedness, and so manipulators only consider votes that also satisfy this restriction. We introduce the model where the preferences of the voters define their own restriction and strategic actions must ``conform'' by using only these votes. In this model, the election after manipulation will retain common domain restrictions. We explore the computational complexity of conformant manipulative actions and we discuss how conformant manipulative actions relate to other manipulative actions.
\end{abstract}

\section{Introduction}

The computational study of election problems is motivated by the
utility of elections to aggregate preferences in multiagent systems
and to better understand the computational tradeoffs between different rules.
A major direction in this area %
has been to study the computational complexity of manipulative
actions on elections (see, e.g., Faliszewski and Rothe~\shortcite{fal-rot:b:handbook-comsoc-control-and-bribery}).

The problems of manipulation~\cite{bar-tov-tri:j:manipulating} and bribery~\cite{fal-hem-hem:j:bribery}
in elections represent two important ways that agent(s) can strategically affect the outcome of an election. Manipulation models the actions of a collection of strategic voters who seek to ensure that their preferred candidate wins by casting strategic votes.
Bribery models the actions of an agent, often referred to as the briber, who sets the votes of a subcollection of the voters to ensure that the briber's preferred candidate wins.
These problems each relate nicely to real-world scenarios
such as how voters may attempt to work together to strategically vote, or the actions of a campaign manager looking to influence the preferences of a group of voters to ensure their candidate wins.

In the manipulation problem each manipulator can cast any strategic vote, and similarly for the bribery problem the votes can be set to any collection of preferences.
However, this is not always a reasonable assumption to make.
Voters may have preferences that satisfy a domain restriction
such as single-peaked preferences~\cite{bla:j:rationale-of-group-decision-making} or single-crossing preferences~\cite{mir:j:single-crossing} where the manipulator or the briber are restricted to using votes that also satisfy the restriction.
We introduce new models of manipulation and bribery where the votes cast by
the manipulators or set by the briber must have already been stated in the election, i.e., the
votes must {\em conform} to the views of the electorate.
In these conformant models of manipulation and bribery
the election after the given manipulative action will retain common domain restrictions such
as being single-peaked or single-crossing.

We consider how the computational complexity of our conformant
models of manipulation and bribery compare to the standard models. Specifically, we show that there are settings where manipulation and bribery are easy in the standard model, but computationally difficult in the conformant model, and vice versa. This shows that there is no reduction in either direction between the standard and conformant cases (unless $\p = \np$).
Conformant manipulation and bribery are also each related to electoral control. The study of electoral control was introduced by Bartholdi, Tovey, and
Trick~\shortcite{bar-tov-tri:j:control}, and it models the actions of an agent who can modify the structure of an election to ensure a preferred candidate wins (e.g., by adding
or deleting voters).
We explore the connection between
conformant manipulation and the exact variant of voter control
as well as conformant bribery and the model of control by replacing voters introduced by Loreggia et al.~\shortcite{lor-nar-ros-bre-wal:c:replacing-control}. This
includes showing reductions between these problems as well as
showing when such reductions cannot exist.
Exact versions of electoral control problems can model scenarios where
the election chair seeks to ensure their preferred outcome by adding exactly the number of voters required to meet the quorum for a vote. This is in line with the standard motivation for control by replacing voters which includes settings such as voting in a parliament where the chair may replace some of the voters, but makes sure to leave the total number the same to avoid detection~\cite{lor-nar-ros-bre-wal:c:replacing-control}.

Our main contributions are as follows.
\begin{itemize}
    \item We introduce the problems of conformant manipulation and conformant bribery, which model natural settings for manipulative attacks on elections.
    
    \item We show that there is no
    reduction in either direction between the standard and conformant versions of
    manipulation (Theorems~\ref{thm:cm-vs-manip} and~\ref{thm:manip-fixed}) and between the standard and conformant versions of bribery (Theorem~\ref{thm:3veto-conformbribery} and Corollary~\ref{cor:bribe-vs-cb}) (unless $\p = \np$).
    
    \item We show conformant
    manipulation reduces to exact
    control by adding voters (Theorem~\ref{thm:restricted-to-exact}),
     conformant bribery reduces to control by replacing voters (Theorem~\ref{thm:bribery-to-replacing}),
    and that reductions do not exist
    in the other direction (Theorems~\ref{thm:xccav-vs-manip} and~\ref{thm:ccrv-vs-bribe})
    (unless $\p = \np$).
\item We obtain a trichotomy theorem for the complexity of exact control by adding voters problem (Theorem~\ref{thm:trichotomy}) for the important class of pure scoring rules.
\end{itemize}

\section{Related Models}

Since the conformant model of manipulation uses only the votes stated in the initial election, it is
related to the possible winner problem with uncertain weights introduced by Baumeister et al.~\shortcite{bau-roo-rot-sch-xia:c:weighted-pw}, which was recently extended by Neveling et al.~\shortcite{nev-rot-wei:c:possible-winner}.
In this problem weights of %
voters
are initially unset and it asks if there exists a way to set the weights such that a given candidate is a winner. As Baumeister et al.~\shortcite{bau-roo-rot-sch-xia:c:weighted-pw} mention, this generalizes control by adding/deleting voters (rather than manipulation or bribery).

There are other models for manipulative actions that have a similar motivation to our conformant
models, i.e., to have the manipulative action
not stand out.
Examples are bribery restrictions where the briber cannot put the
preferred candidate first in the bribed votes (negative bribery~\cite{fal-hem-hem:j:bribery}), restrictions on how much
the votes of voters are changed~\cite{yan-shr-guo:j:bribery-distance},
and restrictions on which voters can be bribed~\cite{dey-mis-nar:j:frugal-bribery}.
However, it is easy to see that in these models all sorts of new votes can be used by the briber, not just votes appearing in the initial election.

Another model in which the votes of the strategic agents are restricted is that of manipulative actions on restricted domains such as
single-peaked~\cite{bla:j:rationale-of-group-decision-making} and single-crossing preferences~\cite{mir:j:single-crossing}.
For example, for manipulation of a single-peaked election the manipulators must all cast votes that are single-peaked with respect to the rest of the electorate~\cite{wal:c:uncertainty-in-preference-elicitation-aggregation}. Notice that in the conformant models we
typically keep the domain restriction,
but manipulative actions for domain
restrictions are quite different
since in those settings the manipulators can cast a vote not stated by any of the nonmanipulators as long as it
satisfies the given restriction.

\section{Preliminaries}

An election $(C,V)$ consists of a 
set of candidates $C$ and a
collection of voters $V$. Each voter $v \in V$ has a corresponding vote, which is a strict total
order preference over the set of candidates. 

A voting rule, $\elec$, is a mapping from an election
to a subset of the candidate set referred to as the winner(s).

\subsection{Scoring Rules}

Our results focus on
(polynomial-time uniform) pure scoring rules.
A scoring rule is a voting rule defined by a family of scoring vectors of the form
$\langle \alpha_1, \alpha_2, \dots, \alpha_m \rangle$ with $\alpha_i \ge \alpha_{i+1}$ such that for a given election with $m$ candidates the
$m$-length scoring vector is used and each candidate receives $\alpha_i$
points for each vote where they are ranked $i$th. The candidate(s)
with the highest score win. 
We use the notation $\score{a}$ to denote the score
of a candidate $a$ in a given election.
Important examples of scoring rules are
 \begin{itemize}
    \item $k$-Approval, $\langle \underbrace{1,\dots,1}_k, 0, \dots, 0\rangle$
    \item $k$-Veto, $\langle 0, \dots, 0, \underbrace{-1, \dots, -1}_k\rangle$
    \item Borda, $\langle m-1, m-2, \dots, 1,0\rangle$
    \item First-Last, $\langle 1, 0, \ldots, 0, -1\rangle$\footnote{We will see that this rule exhibits very unusual complexity behavior. This rule has also been referred to as ``best-worst'' in social choice (see, e.g.,~\cite{kur:j:best-worst}).}
\end{itemize}
Note that there are uncountably many scoring rules, that scoring vectors may not be computable, and that the definition of scoring rule does not require any relationship between the scoring vectors for different numbers of candidates. To formalize
the notion of a natural scoring rule, we use the notion of \emph{(polynomial-time uniform) 
pure scoring rules}~\cite{bet-dor:j:possible-winner-dichotomy}. These are families of scoring rules where the scoring vector for $m+1$ candidates can be obtained from the scoring vector for $m$ candidates by adding one coefficient and for which there is a polynomial-time computable function that outputs, on input $0^m$, the scoring vector for $m$. Note that the election rules above are all pure
scoring rules. Also note that manipulative action problems for pure scoring rules are in \np.

\subsection{Manipulative Actions}

Two of the most commonly-studied manipulative
actions on elections are manipulation~\cite{bar-tov-tri:j:manipulating}
and bribery~\cite{fal-hem-hem:j:bribery}. 
We consider conformant variants of these standard
problems by requiring that the strategic votes cast by the manipulators or set by the briber must have appeared in the initial election. We define these problems formally
below.

\prob{\elec-Conformant Manipulation}%
{An election $(C,V)$,
a collection of manipulative voters $W$, and a preferred candidate $p$.}%
{Does there exist a way to set the votes of the manipulators
in $W$ using only the votes that occur in $V$ such that $p$
is an \elec-winner of the election $(C, V \cup W)$?}

\prob{\elec-Conformant Bribery}%
{An election $(C,V)$, a bribe limit $k$, and a preferred candidate $p$.}%
{Does there exist a way to set the preferences of a
subcollection of at most $k$ voters in $V$
to preferences in $V$ such that $p$ is an \elec-winner?}

\subsection{Computational Complexity}
\label{sec:cc}

We assume the reader is familiar with the complexity classes \p\ and
\np, polynomial-time many-one reductions, and what it means to be complete for a given class. 
Our \np-completeness proofs utilize
reductions from the well-known \np-complete problem 3-Dimensional Matching~\cite{kar:b:reducibilities}.

\prob{3-Dimensional Matching (3DM)}%
{Pairwise disjoint sets $X$, $Y$, and $Z$ such that $\|X\|=\|Y\|=\|Z\|=k > 0$
and set ${\cal M} \subseteq X \times Y \times Z$.}%
{Does there exist a ${\cal M}' \subseteq {\cal M}$ of size $k$ such that each $a \in X \cup Y \cup Z$ appears exactly once in ${\cal M}'$.}
3-Dimensional Matching remains \np-complete when each element $a \in X \cup Y \cup Z$
appears in exactly three triples (Garey and Johnson~\cite{gar-joh:b:int} show this for at most three triples, which can be adapted to exactly three by the approach from Papadimitriou and Yannakakis~\cite{pap-yan:j:spanning-tree}). Note that in that case $\|{\cal M}\| = 3k$.
For our polynomial-time algorithms, we will reduce to polynomial-time computable (edge) matching problems.
We define the most general version we use, Max-Weight $b$-Matching for Multigraphs, below. The version of this problem for simple graphs was shown to be in \p\ by Edmonds and Johnson~\cite{edm-joh:c:matching}, and as explained in~\cite[Section 7]{ger:b:matching}, it is easy to reduce such problems to 
Max-Weight Matching, which is well-known to be in P~\cite{edm:j:matching}, using the construction from~\cite{tut:j:factor}. (Note that we can assume that the $b$-values are bound by the number of edges in the graph.)

\prob{Max-Weight $b$-Matching for Multigraphs}%
{An edge-weighted multigraph $G = (V,E)$,
a function $b: V \to \mathbb{N}$, and
integer $k \geq 0$.}%
{Does there exist an
$E' \subseteq E$ of weight at least $k$ such
that each vertex $v \in V$ is incident to at
most $b(v)$ edges in $E'$?}

In addition to NP-hardness and polynomial-time results, we have results that link the complexity of voting problems to the complexity of
Exact Perfect Bipartite Matching~\cite{pap-yan:j:spanning-tree}.

\prob{Exact Perfect Bipartite Matching}%
{A bipartite graph $G = (V,E)$, a set of red edges $E' \subseteq E$, and integer $k \geq 0$.}%
{Does $G$ contain a perfect matching that contains exactly $k$ edges from $E'$?}

This problem was shown to be in \rp\ by Mulmuley, Vazirani, and Vazirani~\cite{mul-vaz-vaz:j:matching}, but it is a 40-year-old open problem whether it is in \p.

\section{Conformant Manipulation}
\label{s:manipulation}

The problem of manipulation asks if it is possible for a given collection
of manipulative voters to set their votes so that their preferred candidate wins.
This problem was first studied computationally by Bartholdi, Tovey,
and Trick~\shortcite{bar-tov-tri:j:manipulating} for the case of one manipulator and
generalized by Conitzer, Sandholm, and Lang~\shortcite{con-lan-san:j:few-candidates}
for the case of a coalition of manipulators.

In our model of conformant manipulation
the manipulators can only cast votes that at least one 
nonmanipulator has stated. As mentioned in the introduction,
this is so that the manipulators vote realistic preferences for the
given election by conforming to the preferences already stated.
Since this modifies the standard model of manipulation, we will
consider how the complexity of these problems relate to one another.

Manipulation is typically easy
for scoring rules. Hemaspaandra and Schnoor~\cite{hem-sch:c:dichotomy-two}
showed that manipulation is in \p\ for every pure scoring rule with a constant number of different coefficients. 
However, our model of conformant manipulation is \np-complete
for even the simple rule 4-approval. 
To show hardness, we use a construction similar to the construction that shows hardness for control by adding voters from Lin~\shortcite{lin:c:manip-k-app}.
\begin{theorem}\label{thm:cm-vs-manip}
4-Approval Conformant Manipulation is \np-complete.
\end{theorem}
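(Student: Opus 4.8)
The plan is to show membership in \np\ and then prove \np-hardness by a reduction from the restricted version of 3-Dimensional Matching (3DM) in which every element of $X \cup Y \cup Z$ occurs in exactly three triples, so that $\|\mathcal{M}\| = 3k$. Membership is immediate from the remark above that manipulative-action problems for pure scoring rules lie in \np: a certificate records, for each of the $k$ manipulators, which vote of $V$ it casts, and verifying that $p$ is a 4-Approval winner of $(C, V \cup W)$ is a polynomial-time score computation.

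For hardness, I would take as candidates the $3k$ element-candidates $X \cup Y \cup Z$ together with the preferred candidate $p$ (no auxiliary candidates are needed) and set $\|W\| = k$. For each triple $t = (x,y,z) \in \mathcal{M}$ I would put into $V$ one \emph{triple voter} whose four approved (top) positions are $x,y,z,p$, fixing the rest of the strict order arbitrarily since 4-Approval ignores it; distinct triples give distinct votes, so ``casting triple $t$'' is a well-defined conformant option. Crucially, the triple voters are the \emph{only} voters approving $p$, so $p$ has score $3k$ from $V$, while each element-candidate, lying in exactly three triples, has score $3$. I would then add \emph{padding voters}, each approving four element-candidates and never $p$, to raise every element-candidate's base score uniformly to $4k-1$; since all start at the same value $3$, a uniform increase of $4(k-1)$ is required, which a balanced design of four-element blocks on the $3k$ elements realizes.

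For correctness, suppose the manipulators cast $a$ triple votes and $k-a$ padding votes (the only two vote types in $V$). Then $p$'s final score is $3k+a$, and each element-candidate's final score is $4k-1$ plus the number of manipulator-approvals it receives, where these approvals total $3a + 4(k-a) = 4k-a$ across the $3k$ elements. For $p$ to be a winner, no element may receive more than $(3k+a)-(4k-1) = a-k+1$ manipulator-approvals. If $a<k$ this bound is nonpositive while $4k-a>0$ approvals are distributed, so some element beats $p$; hence $p$ wins only if $a=k$. When $a=k$ the manipulators supply exactly $3k$ approvals over $3k$ elements, and $p$ wins iff no element gets more than one, i.e.\ iff the $k$ chosen triples cover every element exactly once---a perfect matching (in which case $p$ and every element tie at $4k$, so $p$ is a co-winner). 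Thus the constructed instance is a yes-instance iff the 3DM instance admits a perfect matching.

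The main obstacle is making the reduction tight against the extra freedom conformant manipulation provides: manipulators may cast \emph{any} vote of $V$, including padding votes, so the construction must prevent $p$ from winning ``for free.'' Forcing $p$ to be approvable only through triple votes is exactly what rules this out, tying $p$'s ability to reach the threshold to the matching structure (and it is also why the rule here is 4-Approval: a triple vote must carry the three elements \emph{and} $p$). The remaining work---balancing the base scores by padding---is routine and depends only on the ``exactly three triples'' restriction, which keeps the element-candidates uniform.
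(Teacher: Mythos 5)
Your proposal is correct and follows essentially the same approach as the paper's proof: a reduction from the restricted 3DM (each element in exactly three triples) with one voter per triple approving $p,x,y,z$, padding that brings every element-candidate's base score to $4k-1$ against $p$'s $3k$, and $k$ manipulators, so that $p$ can win only if all manipulators cast triple votes hitting each element at most once, i.e., an exact cover. The only difference is cosmetic: the paper pads each element $a$ with $4k-4$ voters voting $a > a_1 > a_2 > a_3 > \cdots$ over three dedicated dummy candidates, while you pad with balanced four-element blocks over the element-candidates themselves (such a design exists, e.g., by cyclic windows of length four, and is vacuous when $k=1$); both yield the same score threshold argument.
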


\begin{proof}
Let the pairwise disjoint sets $X$, $Y$, and $Z$ such that $\|X\|=\|Y\|=\|Z\|=k > 0$ and ${\cal M} \subseteq X \times Y \times Z$ be an instance of 3DM
where each $a \in X \cup Y \cup Z$ appears in exactly three triples. Note that $\|{\cal M}\| = 3k$.
We
construct an instance of conformant manipulation as follows.

Let the candidate set $C$ consist of preferred candidate $p$, and for each $a$ in $X \cup Y \cup Z$, we have candidate $a$ and three padding candidates $a_1, a_2$, and $a_3$. 
We now construct the collection of nonmanipulators.

\begin{itemize}
\item For each $(x,y,z)$ in ${\cal M}$, we have a voter voting 
$p > x > y > z > \cdots$. 
\item For each $a$ in $X \cup Y \cup Z$, we have $4k-4$ voters voting
$a > a_1 > a_2 > a_3 > \cdots $.
\end{itemize}
We have $k$ manipulators.

Note that we have the following scores from the nonmanipulators. 
$\score{p} = 3k$ and for $a \in X \cup Y \cup Z$, $\score{a} = 4k - 4 + 3 = 4k - 1$ and $\score{a_1} = \score{a_2} = \score{a_3} = 4k - 4$.

If there is a matching, let the $k$ manipulators vote corresponding to the matching. Then $p$'s score increases by $k$, for a total of $4k$, and for each $a \in X \cup Y \cup Z$, $\score{a}$ increases by 1 for a total of $4k$. The scores of the dummy candidates remain unchanged. Thus, $p$ is a winner.

For the converse, suppose the manipulators vote such that $p$ is a winner. Since $k > 0$, after manipulation there is a candidate $a$ in $X \cup Y \cup Z$ with score at least $4k$. The highest possible score for $p$ after manipulation is $4k$, and this happens only if $p$ is approved by every manipulator.
It follows that every manipulator approves $p$ and for every $a$ in $X \cup Y \cup Z$, $a$ is approved by at most one manipulator. This implies that the votes of the manipulators correspond to a cover.
\qed\end{proof}

We just saw a case where the complexity of conformant manipulation is harder than
the standard problem (unless $\p = \np$). This is not always the case.
One setting where it is clear to see how to determine if conformant manipulation is possible
or not is when there are only a fixed number of manipulators.
We have only a polynomial number of votes to choose from (the votes of the nonmanipulators)
and so a fixed number of manipulators can brute force
these choices in polynomial time as long as determining the winner can be done in 
polynomial time.

\begin{theorem}\label{thm:manip-fixed}
Conformant Manipulation is in \p\ for every voting rule with a polynomial-time winner 
problem when there are a fixed number of manipulators.
\end{theorem}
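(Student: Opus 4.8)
The plan is a direct brute-force search over all conformant vote assignments for the manipulators, exactly as the surrounding discussion suggests. The key observation is that the votes available to each manipulator are precisely the distinct preference orders appearing in $V$, of which there are at most $\|V\|$. Since there is a fixed number $k$ of manipulators, the total number of ways to assign conformant votes to all of them is at most $\|V\|^k$, and because $k$ is a constant this quantity is polynomial in the size of the input. (One could instead enumerate the multisets of $k$ votes, since the outcome depends only on the multiset of votes cast, but the cruder bound $\|V\|^k$ already suffices.)

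First I would enumerate these at most $\|V\|^k$ assignments one at a time. For each assignment, I set the votes of the manipulators in $W$ accordingly, form the full election $(C, V \cup W)$, and invoke the assumed polynomial-time winner algorithm for \elec\ to test whether $p$ is a winner; the procedure accepts if and only if at least one assignment makes $p$ a winner. Correctness is immediate from the definition of \elec-Conformant Manipulation: any successful conformant manipulation must set each manipulator's vote to a vote occurring in $V$, so it is among the assignments enumerated, and conversely every enumerated assignment is itself a legal conformant manipulation. Hence $p$ can be made an \elec-winner by some conformant manipulation exactly when the search succeeds.

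For the running time, the search performs $\|V\|^k$ iterations, each consisting of one call to the winner algorithm on an election whose size is polynomial in the original input (we have added only the $k$ fixed votes of $W$); since $k$ is fixed, the product of a polynomial number of iterations with a polynomial per-iteration cost remains polynomial. There is essentially no obstacle here: the only points needing care are the observation that the number of distinct conformant votes is bounded by $\|V\|$ and that raising a polynomial quantity to the fixed power $k$ preserves polynomiality. It is precisely the fixedness of $k$ that the argument relies on — if the number of manipulators were part of the input, the bound $\|V\|^k$ would no longer be polynomial, consistent with the hardness exhibited in Theorem~\ref{thm:cm-vs-manip}.
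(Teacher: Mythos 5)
Your proposal is correct and follows exactly the argument the paper gives (in the discussion immediately preceding the theorem): with at most $\|V\|$ conformant votes available and a fixed number $k$ of manipulators, brute-forcing the at most $\|V\|^k$ assignments and invoking the polynomial-time winner algorithm on each resulting election decides the problem in polynomial time. Your write-up also correctly handles the edge observations (dependence only on the multiset of manipulator votes, and why fixedness of $k$ is essential), so nothing is missing.
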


This behavior is in contrast to what can occur for the standard model of
manipulation. One well-known example is that manipulation for the Borda rule is \np-complete
even when there are only two manipulators~\cite{dav-kat-nar-wal-xia:j:borda-manip,bet-nei-woe:c:board-manip}.
Intuitively the hardness
of manipulation in this case is realized by the choice of different votes that the manipulator(s) have.

\begin{corollary}
\label{thm:manip-vs-cm}
For Borda, Manipulation with two manipulators is \np-complete, but Conformant Manipulation with two manipulators is in \p.
\end{corollary}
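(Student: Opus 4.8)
The plan is to observe that both halves of the corollary follow immediately from results already in hand, so the ``proof'' is really a juxtaposition. For the \np-completeness of Borda Manipulation with two manipulators, I would simply invoke the known results of Davies et al.~\cite{dav-kat-nar-wal-xia:j:borda-manip} and Betzler, Niedermeier, and Woeginger~\cite{bet-nei-woe:c:board-manip}, cited just above, which establish exactly this statement. Membership in \np\ is already covered by the earlier observation that manipulative action problems for pure scoring rules lie in \np, and Borda is a pure scoring rule; the matching \np-hardness is the content of those two cited papers. Together these give \np-completeness for the standard two-manipulator Borda case.

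For the second half, I would appeal directly to Theorem~\ref{thm:manip-fixed}, which places Conformant Manipulation in \p\ for any voting rule whose winner problem is polynomial-time computable, provided the number of manipulators is fixed. Borda's winner problem is plainly in \p, since each candidate's Borda score is obtained in polynomial time by summing positional points across all votes. With two manipulators (a fixed number), the hypotheses of Theorem~\ref{thm:manip-fixed} are met, and so Conformant Manipulation for Borda with two manipulators is in \p.

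There is no substantive obstacle here: the corollary is a direct pairing of an externally cited hardness result with the tractability guaranteed by Theorem~\ref{thm:manip-fixed}. The only point I would make explicit is the conceptual contrast it is meant to illustrate, namely that the hardness of standard two-manipulator Borda manipulation stems entirely from the manipulators' freedom to choose among exponentially many possible votes; the conformant restriction removes exactly this freedom by confining the manipulators to the polynomially many votes already appearing in the election, which is what renders the problem tractable.
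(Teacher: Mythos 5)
Your proposal is correct and matches the paper's own (implicit) justification exactly: the corollary is stated as a direct consequence of the cited \np-completeness results for two-manipulator Borda manipulation~\cite{dav-kat-nar-wal-xia:j:borda-manip,bet-nei-woe:c:board-manip} together with Theorem~\ref{thm:manip-fixed} applied to Borda's polynomial-time winner problem. Nothing further is needed.
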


In some ways conformant manipulation acts more like the problem of 
electoral control introduced by Bartholdi, Tovey, and 
Trick~\shortcite{bar-tov-tri:j:control}, specifically {\em Control by Adding 
Voters}, which asks when given an election, a collection
of unregistered voters, add limit $k$, and preferred candidate $p$, if $p$ is a
winner of the election after adding at most $k$ of the unregistered voters. In conformant manipulation we can think of the nonmanipulative voters as describing
the different votes to choose from for the manipulators.

At first glance it may appear that there is a straightforward reduction
from conformant manipulation to control by adding voters, but in
conformant manipulation {\em all $k$} of the manipulators must cast a vote, while
in control by adding voters {\em at most $k$} votes are added. In this way
conformant manipulation is closer to the ``exact'' variant of control by adding
voters where {\em exactly $k$} unregistered voters must be added.
The following is immediate.

\begin{theorem}\label{thm:restricted-to-exact}
Conformant Manipulation polynomial-time many-one reduces to
Exact Control by Adding Voters. 
\end{theorem}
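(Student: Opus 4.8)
The plan is to exhibit a reduction that leaves the final election unchanged, so that the two winner questions coincide. Given a Conformant Manipulation instance consisting of an election $(C,V)$, a collection $W$ of manipulators, and a preferred candidate $p$, set $k = \|W\|$. I would output the Exact Control by Adding Voters instance whose registered election is $(C,V)$, whose preferred candidate is $p$, whose add bound is $k$, and whose pool $U$ of unregistered voters consists of $k$ copies of each distinct vote occurring in $V$. Since there are at most $\|V\|$ distinct votes, we have $\|U\| \le \|W\| \cdot \|V\|$, and $U$ is clearly computable in polynomial time.

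The one point that needs care is the mismatch between the two models: in conformant manipulation several manipulators may cast the \emph{same} vote from $V$, whereas the unregistered voters of control by adding voters are distinct individuals, each usable at most once. Stocking $U$ with $k$ copies of every distinct vote exactly bridges this gap, since $k$ copies are enough even in the extreme case where all $k$ manipulators cast identical votes. This is really the only obstacle, and it is resolved entirely by the choice of multiplicity.

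For correctness I would observe that the resulting elections are literally identical in the two settings. A legal conformant assignment gives $W$ a size-$k$ multiset $M$ of votes drawn from the distinct votes of $V$; the same multiset is realized by adding the corresponding $k$ distinct voters of $U$, and this uses exactly $k$ additions. Conversely, adding exactly $k$ voters from $U$ produces a size-$k$ multiset of votes each appearing in $V$, i.e., a valid conformant assignment for $W$. In both directions the final election is $(C, V \cup M)$ for the same $M$, so $p$ is an \elec-winner after conformant manipulation if and only if $p$ is an \elec-winner after adding exactly $k$ voters. Because the ``exactly $k$'' requirement matches precisely the fact that all $\|W\|$ manipulators must vote, the equivalence is immediate.
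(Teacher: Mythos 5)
Your reduction is correct and is precisely the one the paper intends: the paper states this theorem as ``immediate'' (with no written proof), the idea being exactly to reuse the votes of $V$ as the unregistered pool with add bound $k = \|W\|$. Your explicit treatment of multiplicity---stocking $U$ with $k$ copies of each distinct vote so that repeated manipulator votes can be realized by distinct unregistered voters---supplies the one detail the paper leaves unstated, and the rest of your argument matches the intended one.
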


Below we show that conformant manipulation is in \p\ for the voting rule 3-approval, but
exact control by adding voters is \np-complete.
And so there is no reduction from exact control by adding voters
to conformant manipulation (unless $\p = \np$).

\begin{theorem}
\label{thm:xccav-vs-manip}
For 3-Approval, Exact Control by Adding Voters is \np-complete, but
Conformant Manipulation is in \p.
\end{theorem}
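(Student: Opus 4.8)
show that 3-Approval Exact Control by Adding Voters is \np-complete, and show that 3-Approval Conformant Manipulation is in \p. The hardness half I would prove by a reduction from 3DM (restricted so each element appears in exactly three triples, as set up in \theoremref{thm:cm-vs-manip}), reusing the spirit of that earlier construction. The new wrinkle is the ``exactly $k$'' requirement, which is precisely what distinguishes this problem from conformant manipulation, so the construction should be arranged so that a set of exactly $k$ added voters achieves the target scores if and only if it corresponds to a perfect matching. I would let the registered voters fix baseline scores so that $p$ needs a boost of exactly $k$ approvals to tie the element candidates, and structure the unregistered voters so that each triple $(x,y,z)\in{\cal M}$ yields a voter approving $p$ together with the three element candidates (padded out with dummies); selecting exactly $k$ such voters that jointly approve each of the $3k$ element candidates exactly once is then equivalent to a 3DM solution.

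\textbf{For the polynomial-time half} I would exploit the ``conform'' restriction together with the fixed approval weight. Under 3-Approval each of the $k$ manipulators casts one of the (at most polynomially many) distinct nonmanipulator votes, and what matters is only \emph{which three candidates} each manipulator approves. The plan is to reduce the question to an edge-matching/flow problem. The key observation is that $p$'s final score is maximized exactly when every manipulator approves $p$, and that we may assume a fixed, easily-determined target score $T$ for $p$; the remaining constraint is to distribute the manipulators' approvals over the other candidates so that no rival exceeds $T$. I would encode each available vote type as a choosable ``bundle'' of approvals and model the selection of $k$ bundles, subject to per-candidate caps (each candidate $c\ne p$ may receive at most $T-\score{c}$ additional approvals), as an instance of Max-Weight $b$-Matching for Multigraphs (or a direct flow formulation), which is in \p\ by the results cited in \sectionref{sec:cc}.

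\textbf{The main obstacle} I anticipate is making the polynomial-time reduction go through cleanly when the available vote types give correlated approvals: a single chosen vote simultaneously contributes to three different candidates' scores, so the caps are not independent, and a naive bipartite matching (manipulators on one side, candidates on the other) does not capture that one manipulator contributes one indivisible triple. The careful step is to set up the matching/flow instance so that choosing a copy of a given vote type forces all three of its approvals together while still respecting the caps $T-\score{c}$ and the requirement that exactly $k$ votes are chosen; I expect this to require a gadget in which vote types, rather than individual approvals, are the matchable objects, with edge multiplicities (hence the need for the \emph{multigraph} version of $b$-matching) accounting for the number of available copies of each vote type. Once that encoding is verified, correctness in both directions follows from the equivalence between a feasible selection of $k$ votes and a feasible $b$-matching of the required weight.
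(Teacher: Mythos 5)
Your polynomial-time half is essentially the paper's argument and is sound: once you note that (WLOG, by a simple exchange argument) every manipulator copies a nonmanipulator vote that approves $p$, the target score is fixed at $\score{p}+k$, and the reduction to Max-Weight/Cardinality $b$-Matching for Multigraphs with caps $b(a)=\score{p}+k-\score{a}$ goes through. In fact, the obstacle you flag --- that a vote touches \emph{three} rivals at once --- dissolves here: any usable vote approves $p$ and exactly \emph{two} other candidates, so each distinct vote $\{p,a,b\}>\cdots$ is literally an edge between $a$ and $b$ (taken with multiplicity $k$), and a $b$-matching of size at least $k$ exists iff a successful conformant manipulation exists. No further gadget is needed.

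The hardness half, however, has a genuine gap. Under 3-approval a voter approves exactly three candidates, so there is no vote approving ``$p$ together with the three element candidates''; that vote shape ($p>x>y>z>\cdots$) is precisely the \emph{4-approval} construction of Theorem~\ref{thm:cm-vs-manip}, and padding with dummies cannot recover it --- a 3-approval vote that approves $p$ has room for only two elements, so $k$ such votes cover at most $2k<3k$ elements. Worse, the flaw is not just in the bookkeeping: any reduction in which all unregistered votes approve $p$ produces instances that your own algorithm from the first half solves in polynomial time (each added vote is again an edge between two rivals, and ``exactly $k$'' is harmless since sub-matchings of a $b$-matching are $b$-matchings), so no construction of this form can establish \np-hardness unless $\p=\np$. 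The paper's reduction works in the opposite direction: the single registered voter $p>d_1>d_2>\cdots$ already makes $p$ a winner, the unregistered voters approve \emph{only} element candidates ($x>y>z>\cdots$ for each triple), and the exactness requirement is the source of hardness --- being forced to add exactly $k$ votes that give $3k$ approvals to the $3k$ element candidates, $p$ (with score $1$) remains a winner iff no element is approved twice, i.e., iff the chosen triples form a perfect matching. Your framing (``$p$ needs a boost of exactly $k$ approvals to tie the element candidates'') misses that for this theorem $p$ must need \emph{no} boost; the added votes must be pure liabilities whose damage has to be spread perfectly.
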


\begin{proof}
Given an election $(C,V)$, $k$ manipulators, and a preferred
candidate $p$, we can determine if conformant manipulation is
possible in the following way.

If there is no nonmanipulator that
approves $p$ then $p$ is a winner if and only if there are no voters.

If there is at least one nonmanipulator that 
approves $p$, the manipulators will all cast votes
that approve of $p$, and so we know the final score of $p$
after manipulation: $fs_{p} = \score{p}+k$. However, these 
manipulator votes will each also approve of two other candidates. 
To handle this we can adapt the approach used by Lin~\shortcite{lin:c:manip-k-app} to show control by adding voters is in \p\ for 3-approval elections, which constructs a
reduction to Max-Cardinality $b$-Matching for Multigraphs.

For each candidate $a \neq p$, let $b(a) = fs_{p}-\score{a}$, i.e., the maximum number of approvals that $a$ can receive from the manipulators without beating $p$. If $b(a)$ is negative for any candidate $a \neq p$ then conformant manipulation is not possible.
For each distinct nonmanipulative vote of the form $\{p,a,b\} > \dots$, add $k$ edges between $a$ and $b$. Conformant manipulation is possible if and only if there is a $b$-edge matching of size at least $k$.
We now consider the complexity of exact control by adding voters.
Given an instance of 3-Dimensional Matching:
 pairwise disjoint sets $X$, $Y$, and $Z$ such that $\|X\|=\|Y\|=\|Z\|=k$, and ${\cal M} \subseteq X \times Y \times Z$ with ${\cal M} = \{M_1, \dots, M_{3k}\}$ we
construct the following instance of exact control by adding voters.

Let the candidate set $C = \{p,d_1,d_2\} \cup X \cup Y \cup Z$,
the preferred candidate be $p$, and the add limit be $k$.

Let there be one registered voter voting $p > d_1 > d_2 > \dots$,
and let the set of unregistered voters consist of
one voter voting $x > y > z > \dots$ for each $M_i = (x,y,z)$.

It is easy to see that $p$ can be made a winner by adding exactly
$k$ unregistered voters if and only if there is a matching of size $k$.\end{proof}
The standard case of control by adding voters for 3-approval is in \p~\cite{lin:c:manip-k-app}, but as shown above the exact case is \np-complete.
Related work that mentions exact control has results
only where the exact variant is also easy~\cite{erd-nev-reg-rot-yan-zor:j:towards-completing,fit-hem:c:random-classification}.
Note that, as observed in~\cite{fit-hem:c:random-classification}, control polynomial-time reduces to exact control, since, for example, 
$p$ can be made a winner by adding at most $k$ voters if and only if $p$ can be made a winner by adding 0 voters or 1 voter or 2 voters or \ldots), and so if exact control is easy, the standard case will be as well, and if the standard case is hard, then the exact case will be hard. Note that we are using a somewhat more flexible notion of reducibility than many-one reducibility here, 
since we are allowing the disjunction of multiple queries to the exact control problem. Such a reduction is called a {\em disjunctive truth-table (dtt) reduction}. This type of reduction is still much less flexible than a Turing reduction.

Is 3-approval special? For the infinite and natural class of pure scoring rules, Table~\ref{tbl:trichotomy} 
completely classifies the complexity of exact control by adding voters and compares this behavior to the complexity of control by adding voters~\cite{hem-hem-sch:c:dichotomy-one} and control by deleting voters~\cite{hem-sch:c:dichotomy-two}. In particular, and in contrast to earlier results, we obtain a \emph{trichotomy} theorem for exact control by adding voters.\footnote{\label{f:epbm}Exact Perfect Bipartite Matching~\cite{pap-yan:j:spanning-tree} is defined in
Section~\ref{sec:cc}.
As mentioned there the complexity of this problem is still open.
And so Theorem~\ref{thm:trichotomy} is a trichotomy theorem unless we solve a 40-year-old open problem.}

\begin{table}

\begin{center}
\begin{tabular}{l|c|c|c}
& {\p} & \ {eq.\ to EPBM} \ & \ {\np-complete} \ \\ \hline
Exact Control by & 0/1/2-approval, 1/2-veto & first-last & all other cases \\
Adding Voters & & &\\[0.5em]
\hline
Control by & 0/1/2/3-approval, 1/2-veto, & & all other cases \\ 
Adding Voters & first-last, $\langle \alpha, \beta, 0, \dots, 0 \rangle$ & & \\[0.5em] \hline
Control by & 0/1/2-approval, 1/2/3-veto, & & all other cases\\
Deleting Voters & first-last, $\langle 0, \dots, 0, -\beta, -\alpha \rangle$ & & \\ %
\end{tabular}
\end{center}

\caption{This table classifies the complexity of all pure scoring rules for the specified control action. 
A scoring rule entry represents all pure scoring rules that are ultimately equivalent to that scoring rule. The dichotomy for control by adding voters is due to~\cite{hem-hem-sch:c:dichotomy-one}, the dichotomy for  control by deleting voters to~\cite{hem-sch:c:dichotomy-two}, and the result for exact control by adding voters for first-last is due to~\cite{fit-hem:c:random-classification}. EPBM stands for the Exact Perfect Bipartite Matching, which is defined in 
Section~\ref{sec:cc}.}
\label{tbl:trichotomy}
\end{table}

\begin{theorem}\label{thm:trichotomy}
For every pure scoring rule $f$,
\begin{enumerate}
\item If $f$ is ultimately (i.e., for all but a finite number of candidates) equivalent to 0-approval, 1-approval, 2-approval, 1-veto, or 2-veto, 
exact control by adding voters is in \p.
\item If $f$ is ultimately equivalent to first-last, then exact control by adding voters is (dtt) equivalent to 
the problem Exact Perfect Bipartite Matching~\cite{pap-yan:j:spanning-tree}.
\item In all other cases, exact control by adding voters is \np-complete (under dtt reductions). 
\end{enumerate}
\end{theorem}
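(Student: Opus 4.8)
The plan is to prove the three parts by first putting every pure scoring rule into a canonical form and then treating the three regimes separately. Recall that two length-$m$ scoring vectors induce the same winners exactly when they are related by a positive affine transformation $\vec\beta = c\vec\alpha + d\vec 1$ with $c>0$; so I would normalize each rule this way and reason about its behavior for all large $m$ (``ultimate equivalence''). Membership in \np\ is immediate in every case: one can guess which exactly-$k$ unregistered voters to add and, because the rule is polynomial-time uniform, compute all scores and check that $p$ wins in polynomial time. The substance is therefore placing each rule into one of the three buckets together with matching lower and upper bounds.

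For the \p\ cases (0/1/2-approval and 1/2-veto) I would reduce to a polynomial-time matching problem, mirroring the 3-approval argument used in \theoremref{thm:xccav-vs-manip}. The point is that for these rules each added vote ``touches'' at most two relevant candidates (its top one or two, or its bottom one or two), so the usable votes form a (multi)graph on the candidate vertices. I would enumerate the final score $\score{p}$ target of $p$ (polynomially many values between $\score{p}$ and $\score{p}+k$); fixing this target fixes how many added voters must favor $p$ and converts ``$a$ must not beat $p$'' into a capacity $b(a)$ on each other candidate. The exactly-$k$ requirement — the feature that separates this from standard control — is then enforced by splitting the added voters into those that help $p$ and those that do not and fixing both counts, after which feasibility is a \p-time question via Max-Weight $b$-Matching for Multigraphs. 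The boundary at two positions is precisely where the graph model breaks: a third active position turns edges into hyperedges.

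For the first-last bucket I would establish dtt-equivalence with Exact Perfect Bipartite Matching in both directions (the first-last result is the one attributed to~\cite{fit-hem:c:random-classification}). Each added vote here contributes $+1$ to its top candidate and $-1$ to its bottom candidate, so a vote is naturally a signed edge, and requiring $p$ to win after adding \emph{exactly} $k$ voters translates into asking for a perfect matching using a prescribed number of a distinguished (``red'') edge class — exactly the structure of Exact Perfect Bipartite Matching. The disjunctive truth-table formulation is what lets me sweep over the polynomially many admissible values (the exact number of $p$-topping and $p$-bottoming votes, hence the exact red-edge count) while keeping each query a single matching instance, and it matches the dtt framework already used in this section to relate standard and exact control.

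The main obstacle is the \np-hardness of \emph{all remaining} pure scoring rules, an infinite family that I would dispatch by a structural case analysis on the normalized shape: (i) three or more positions sharing the top value, or three or more sharing the bottom value ($3$-approval/$3$-veto and their generalizations), which I would reduce from 3-Dimensional Matching essentially as in \theoremref{thm:xccav-vs-manip}, using padding candidates and registered voters to fix baseline scores so that adding \emph{exactly} $k$ voters forces an exact cover; and (ii) the genuinely two-position rules with unequal positive (or negative) weights such as $\langle \alpha, \beta, 0, \dots, 0\rangle$ with $\alpha>\beta>0$, together with rules having interior variation such as Borda. Cases of type (ii) are the delicate ones, since these rules are in \p\ for \emph{standard} (at-most-$k$) control by the dichotomy of~\cite{hem-hem-sch:c:dichotomy-one}, so their hardness must be wrung entirely out of the exactly-$k$ constraint; the reduction has to use that constraint together with dummy candidates that absorb the ``wrong'' votes to force a partition or cover. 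Throughout, I would cross-check the resulting boundary against the standard-control dichotomy in Table~\ref{tbl:trichotomy} to confirm that precisely $3$-approval, $\langle\alpha,\beta,0,\dots,0\rangle$, and first-last migrate out of \p, and I would phrase the lower bounds under dtt reductions so that they compose with the exact/standard-control relationship established earlier in the section.
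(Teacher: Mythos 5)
Your treatment of parts 1 and 2 is in the right spirit (the paper disposes of 0/1-approval/1-veto by trivial/greedy arguments and cites prior work for 2-approval, 2-veto, and first-last, much as your matching-based sketch would reconstruct). The genuine gap is in part 3. Your plan is to prove NP-hardness of the infinite family of remaining rules by a structural case analysis with fresh 3DM reductions, and you yourself identify the delicate cases --- rules like $\langle \alpha, \beta, 0, \dots, 0\rangle$ and 3-approval that are in \p\ for standard control by adding voters, so that hardness ``must be wrung entirely out of the exactly-$k$ constraint.'' For exactly those cases you supply no construction, only the hope that ``dummy candidates that absorb the wrong votes'' will force a cover; that is the hard core of the theorem and it is missing. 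Moreover, your split into cases (i) and (ii) is never shown to exhaust all pure scoring rules, and you misclassify Borda: Borda is \np-complete for standard control by adding voters (it is in ``all other cases'' of the known dichotomy), so it is one of the easy cases, not a delicate one --- a sign that the boundary analysis your plan depends on is off.

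The paper avoids all of this with one observation you are missing: control by \emph{deleting} voters also dtt-reduces to exact control by \emph{adding} voters, since $p$ can be made a winner by deleting at most $k$ of the $n$ voters if and only if $p$ can be made a winner by adding exactly $n-k$, or $n-k+1$, \ldots, or $n$ of those voters to the empty election. Combined with the reduction you do use (standard adding-voters control dtt-reduces to exact adding-voters control) and the two known dichotomies --- for control by adding voters~\cite{hem-hem-sch:c:dichotomy-one} and control by deleting voters~\cite{hem-sch:c:dichotomy-two} --- this settles every remaining case at once: the rules that are in \p\ for \emph{both} forms of standard voter control are precisely 0/1/2-approval, 1/2-veto, and first-last (buckets 1 and 2), so every rule in bucket 3 is NP-hard for at least one of the two, and hence NP-hard (under dtt reductions) for exact control by adding voters. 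In particular, your delicate cases $\langle \alpha, \beta, 0, \dots, 0\rangle$ and 3-approval are exactly the rules that are easy for adding but hard for deleting, and the deletion reduction handles them with no new construction. Without this idea (or a completed family of direct reductions covering all remaining rules), your proof of part 3 is incomplete.
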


\begin{proof}~\begin{enumerate}
\item
The case for 0-approval is trivial, since all candidates are always winners. The 1-approval and 1-veto cases follow by straightforward greedy algorithms. The case for 2-approval can be found in~\cite{fit-hem:c:random-classification} and the case for 2-veto is similar.

\item
The case for first-last can be found in~\cite{fit-hem:c:random-classification}. 

\item
Note that the remaining cases are hard for control by adding voters or for control by deleting voters.
We have already explained how we can dtt reduce control by adding voters to exact control by adding voters.
Similarly, we can dtt reduce control by deleting voters to exact control by adding voters, since
$p$ can be made a winner by deleting at most $k$ voters if and only $p$ can be made a winner by adding to the empty set $n-k$ voters or $n-k+1$ voters or $n-k+2$ voters or \ldots or  $n$ voters, where $n$ is the total number of voters.
It follows that all these cases are NP-complete (under dtt reductions).
\end{enumerate}
\end{proof}

In this section, we compared conformant manipulation with manipulation and conformant manipulation with exact control
by adding voters. We can also look at conformant manipulation versus control by adding voters. Here we also find voting rules where the manipulative actions differ. In particular, control by adding voters for first-last is in \p~\cite{hem-hem-sch:c:dichotomy-one}, but we show in the appendix that conformant manipulation for first-last is equivalent to exact perfect bipartite matching.

\begin{theorem}
\label{thm:cm-vs-ccav}
First-Last Conformant Manipulation is equivalent to Exact Perfect Bipartite Matching (under dtt reductions).
\end{theorem}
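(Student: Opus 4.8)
The plan is to prove the two directions separately, and to exploit throughout that under First-Last only the top and bottom candidate of each vote matter: a conformant manipulator vote is completely described by the ordered pair (its most-preferred candidate, its least-preferred candidate), so the manipulators are really choosing a multiset of $k$ such ``arcs,'' drawn with repetition from the arcs realized by the nonmanipulators, and the final score of a candidate $a$ is $\score{a}$ plus the number of chosen arcs with $a$ on top minus the number with $a$ on the bottom.

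For the easy direction, First-Last Conformant Manipulation $\leq_{dtt}$ Exact Perfect Bipartite Matching, I would simply compose two results already in hand. By Theorem~\ref{thm:restricted-to-exact}, First-Last Conformant Manipulation many-one reduces to First-Last Exact Control by Adding Voters, and by part~2 of Theorem~\ref{thm:trichotomy} the latter dtt-reduces to Exact Perfect Bipartite Matching. Since a many-one reduction followed by a dtt reduction is again a dtt reduction (substitute the single produced instance into the disjunction of queries), this direction is immediate.

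The substance is the converse, Exact Perfect Bipartite Matching $\leq_{dtt}$ First-Last Conformant Manipulation. Given a bipartite graph $G=(A\cup B,E)$ with $\|A\|=\|B\|=n$, red edges $E'$, and target $k$, I would introduce candidates $p$, one candidate per vertex of $A\cup B$, and auxiliary candidates, and I would have the nonmanipulators realize, for each edge $e=(a,b)$ oriented from $A$ to $B$, an arc with $a$ on top and $b$ on the bottom, providing enough copies that the conformant pool contains every needed arc. The base scores (set by additional nonmanipulators, and arranged so that $p$ appears neither first nor last, fixing $p$'s final score) would be tuned so that, for $p$ to win, each $a\in A$ may be placed on top by the manipulators \emph{at most} once while each $b\in B$ must be placed on the bottom \emph{at least} once; with the right number of relevant manipulator arcs this pins the selection to a perfect matching. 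Fixing the number of manipulators to $n+k$ and charging each selected red edge one extra vote against an auxiliary ``red'' candidate would then force exactly $k$ of the matched edges to be red.

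The main obstacle, and the reason the problem lands on Exact Perfect Bipartite Matching rather than in \p, is the inherent asymmetry of First-Last: the winner condition ``$a$ does not beat $p$'' can cap how often a candidate is ranked \emph{first} and can force how often a candidate is ranked \emph{last}, but it cannot do the reverse. Consequently, forcing the selected arcs to assemble into a genuine perfect matching \emph{and} to contain exactly $k$ red edges cannot be achieved with simple per-candidate score thresholds alone; the danger is a ``half-used'' red edge that covers only one of its endpoints and so cheats the count. I would address this by introducing per-edge midpoint candidates, so that the two endpoints of a chosen red edge are forced to be covered together rather than decoupled, and by leveraging the exact number of manipulators to tally the red choices. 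Designing this gadget so that the extra-vote accounting is tamper-proof while $p$'s final score stays fixed is where the real work lies, and I expect it to closely mirror the Exact-Perfect-Bipartite-Matching hardness construction underlying Theorem~\ref{thm:trichotomy}(2), adapted from added voters to repeated conformant votes.
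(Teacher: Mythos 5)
Your first direction is exactly the paper's: compose Theorem~\ref{thm:restricted-to-exact} with the known equivalence (for first-last) of exact control by adding voters and Exact Perfect Bipartite Matching, observing that a many-one reduction followed by a dtt reduction is again a dtt reduction. That part is correct and complete.

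The converse direction, however, contains a genuine gap, and you essentially flag it yourself. In your arc model, a manipulator who selects edge $(a,b)$ casts a single vote with $a$ first and $b$ last, and that vote is the same kind of object whether $(a,b)$ is red or not. Since first-last scores see only tops and bottoms, no per-candidate threshold and no count of manipulators can then distinguish a perfect matching containing $k$ red edges from one containing some other number; your proposed fix of ``charging each selected red edge one extra vote against an auxiliary red candidate'' has no enforcement mechanism, because nothing in the winner condition compels the coalition to pair that extra vote with the red arc. A second unaddressed issue is that in conformant manipulation the vote pool is created by nonmanipulators whose ballots count: one voter $a > \cdots > b$ per edge skews each vertex's base score by its degree, and your ``tuned base scores'' must repair this while keeping $p$ neither first nor last anywhere. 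The paper resolves both problems with a single device that your ``per-edge midpoint candidates'' remark gestures at but never builds: each edge $(a_i,b_j)$ is represented by a closed cycle of votes through fresh dummies, $b_j > \cdots > d_1^{ij}$, $d_1^{ij} > \cdots > d_2^{ij}$, \dots, $> a_i$, $a_i > \cdots > b_j$, where red cycles have $n+3$ votes and nonred cycles $n+2$. Every full cycle is score-neutral, so the pool perturbs nothing; ``selecting'' an edge means casting all cycle votes \emph{except} $a_i > \cdots > b_j$, which gives $b_j$ the needed $+1$ and $a_i$ the needed $-1$ while consuming $n+2$ or $n+1$ manipulators according to color. With exactly $n(n+1)+k$ manipulators, and cycle length exceeding $k$ so that leftover manipulators cannot assemble any neutral set of votes, a successful manipulation is forced to encode a perfect matching with exactly $k$ red edges. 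That construction, which you defer as ``where the real work lies,'' is the actual content of the theorem, so the proposal as written does not constitute a proof.
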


We also show in the appendix that there exists a voting rule where control by adding voters is hard and conformant manipulation is easy. 

\begin{theorem}\label{t:ccav-vs-cm}
There exists a voting rule where
Control by Adding Voters is \np-complete, but Conformant Manipulation is in \p.
\end{theorem}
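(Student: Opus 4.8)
The plan is to build an \emph{artificial} voting rule $\elec$ that is insensitive to how often a vote is repeated---formally, $\elec(C,V)$ depends only on $C$ and the \emph{set} of distinct preference orders occurring in $V$, not on their multiplicities. This single property cleanly separates the two problems along their definitional gap: conformant manipulation may only add copies of votes already present, and so it can never enlarge the set of distinct votes, whereas control by adding voters may add \emph{arbitrary} unregistered votes and so can introduce brand-new distinct votes. I will make $\elec$ poly-time computable, so that both problems are well defined and lie in \np.

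For such an $\elec$, conformant manipulation is immediately in \p. Given $(C,V)$, manipulators $W$, and $p$, every vote the manipulators cast must already occur in $V$, so the set of distinct votes in $V \cup W$ equals that of $V$ (as long as $V$ is nonempty); hence $\elec(C,V\cup W)=\elec(C,V)$, and $p$ can be made a winner iff $p$ is already a winner of $(C,V)$---a single poly-time winner check. The only boundary cases, $V=\emptyset$ with $\|W\|>0$ (impossible, since there is no vote to conform to) and $\|W\|=0$ (decide the original election), are also trivially decidable.

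It remains to make control by adding voters \np-complete for this same $\elec$ by encoding a hard covering problem into the \emph{set} of distinct votes. I would reduce from 3-Dimensional Matching where each element appears in exactly three triples. The registered voters and the candidate set will encode the element set $X\cup Y\cup Z$ together with a ``well-formed instance'' marker, but will cover no element; the pool of unregistered voters contains exactly one ``triple vote'' for each $M_i=(x,y,z)\in\mathcal{M}$, whose top positions mark the three covered elements. I will define $\elec$ so that (reading off $C$ and the distinct votes present) $p$ is the unique winner exactly when the instance is well formed and every element-candidate is covered by some triple vote present, and otherwise a fixed candidate $w\neq p$ wins. Then adding at most $k$ unregistered voters can make $p$ a winner iff some $\le k$ triples cover all $3k$ elements; since any cover needs at least $k$ triples, this holds iff an exact cover (a matching) of size $k$ exists. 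This condition is poly-time checkable from the distinct-vote set, so $\elec$ is poly-time and control by adding voters is in \np, giving \np-completeness.

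The delicate part is threading a single definition of $\elec$ through both requirements: it must be invariant under duplicating votes (to trivialize conformant manipulation) yet sensitive enough to the distinct-vote set to carry the covering reduction (to make adding voters hard). These are compatible precisely because control by adding voters can inject genuinely new distinct votes while conformant manipulation cannot, but I must be careful that (i) the covering condition and the instance marker are recoverable in polynomial time from $(C,S)$ alone, where $S$ is the distinct-vote set, so that $\elec$ is a legitimate poly-time voting rule, and (ii) the registered votes set up the frame without already covering any element, so that $p$ does not win ``for free'' and the add limit genuinely forces an exact cover.
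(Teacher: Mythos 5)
Your proposal is correct as a proof of the literal statement, but it takes a genuinely different route from the paper. You build a multiplicity-invariant rule (winners depend only on $C$ and the \emph{set} of distinct votes), so conformant manipulation becomes decidable by a single winner check---conforming votes can never change the distinct-vote set---while \np-hardness of control by adding voters is obtained by baking a covering predicate into the rule: the rule itself only performs the poly-time check ``is every element candidate covered by some triple vote present,'' and the hardness lives entirely in the chair's choice of which $\le k$ unregistered triple votes to add. The paper instead constructs a fully explicit, anonymous and neutral hybrid rule: a candidate wins if (1) it is ranked first at least twice and no candidate is ranked first more often, or (2) all candidates are first at most once and it wins under the scoring vector $\langle m+3,0,\ldots,0,-1,-1,-1\rangle$; conformant manipulation is then in \p\ by a nontrivial argument (copy a $p$-first vote if one exists, otherwise $p$'s score is at most $0$ while some candidate scores positively), and hardness of control by adding voters follows by padding Lin's 3-veto reduction so that case (2) always applies. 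Two remarks on the trade-off. First, your approach trivializes conformant manipulation---the manipulators can \emph{never} affect the outcome---which is very close in spirit to the ``require all votes to be distinct'' idea that the paper explicitly rejects as abusing the model; the paper's rule keeps conformant manipulation a meaningful problem (manipulators can change winners) and still shows it easy, which is a stronger kind of separation even though the theorem statement does not demand it. Second, your construction leaves real work unfinished at the point you flag yourself: the rule must recover $p$, the element candidates, and the triple structure from $(C,S)$ alone, unambiguously and in polynomial time, for \emph{every} election (including malformed ones), and one must verify that no reachable election in the reduction is misparsed (e.g., a triple vote mistaken for the marker vote). This is standard marker-vote engineering and is completable, so I would not call it a gap, but it is precisely the part where the paper's explicit two-case rule costs nothing.
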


This shows that a reduction from control by adding voters to
conformant manipulation does not exist (unless $\p = \np$).

\section{Conformant Bribery}
\label{s:bribery}

We now turn to our model of conformant bribery.
The standard bribery problem introduced by
Faliszewski, Hemaspaandra, and Hemaspaandra~\shortcite{fal-hem-hem:j:bribery}
asks when given an election, a bribe limit $k$, and
a preferred candidate $p$, if there exists a subcollection of at most $k$ voters whose votes can be changed such that $p$ is a winner. In our model of conformant bribery the votes can only be changed to votes that appear in the initial election.
As with conformant manipulation, this is so that the votes are changed to preferences that are still realistic with respect to the preferences already stated.
Notice how this also nicely models how a voter can convince another to vote their same vote.
In the same way as we did with manipulation in the previous section, we can compare the complexity behavior of
our conformant model with respect to the standard model.

Bribery is in \p\ for the voting rule 3-veto~\cite{lin:c:manip-k-app},
but we show below that our model of conformant
bribery is \np-complete for this rule.
\begin{theorem}\label{thm:3veto-conformbribery}
3-Veto Conformant Bribery is \np-complete.
\end{theorem}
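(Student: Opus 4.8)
The plan is to first note that membership in \np\ is immediate: 3-Veto is a pure scoring rule, and the excerpt already observes that manipulative-action problems for pure scoring rules lie in \np. For hardness I would reduce from the exactly-three-triples version of 3DM, in the same spirit as Theorem~\ref{thm:cm-vs-manip}, building an election in which bribing a voter to copy an already-present ``triple vote'' corresponds to selecting a triple, and in which $p$ can be made a winner only by selecting a perfect matching. Concretely, the candidates are $p$, one candidate per element of $X \cup Y \cup Z$, and two guard candidates $g_1, g_2$. The voters are: (i) $k+4$ \emph{$p$-vetoers}, each ranking $\{p, g_1, g_2\}$ in the bottom three positions (so each vetoes exactly $p, g_1, g_2$); and (ii) for each triple $M_i = (x,y,z)$ one \emph{template voter} whose bottom three are exactly $x, y, z$. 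The bribe limit is $k$ and the preferred candidate is $p$. Using the exactly-three property, the baseline scores are $\score{p} = \score{g_1} = \score{g_2} = -(k+4)$ and $\score{a} = -3$ for every element $a$, so initially every element strictly beats $p$.

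The forward direction is routine: given a matching, bribe $k$ of the $p$-vetoers to copy the $k$ matching template votes. Each element then gains exactly one veto (reaching $-4$), the $k$ rebribed voters stop vetoing $p, g_1, g_2$ (so all three rise to $-4$), and $p$ ties for the win. The converse is where the real work lies, and it is also what I expect to be the main obstacle.

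The obstacle is that, unlike approval-style rules, a veto rule offers a tempting shortcut: one can raise $\score{p}$ itself by rebribing $p$-vetoers to votes that do not rank $p$ last, and a single point of increase in $\score{p}$ relaxes the requirement on \emph{every} element simultaneously, which threatens to bypass the covering structure entirely. The construction is designed to neutralize exactly this shortcut: the only votes in the election that do not veto $p$ are the template (triple) votes, so the sole way to raise $\score{p}$ is to rebribe $p$-vetoers into triple votes, which couples raising $p$ with covering elements. Writing $r$ for the number of $p$-vetoers rebribed to triple votes, one gets $\score{p} = -(k+4) + r$, while the total number of element-vetoes available is at most $9k + 3r$; requiring every element to be vetoed at least $(k+4) - r$ times yields $3k\,((k+4)-r) \le 9k + 3r$, which simplifies to $r \ge k$. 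Since $r \le k$, we must have $r = k$, and then equality forces each element to be vetoed exactly four times, i.e.\ covered exactly once by the $k$ chosen triples -- a perfect matching. The guards $g_1, g_2$ are vetoed by precisely the same voters as $p$, so they shadow $\score{p}$ exactly and can never strictly beat $p$; this is what lets the two mandatory guard slots in the $p$-vetoer ballots be filled without creating new threats, and lets bribes that touch template voters be dismissed as never helpful. I expect the bulk of the write-up to be this counting argument, confirming that no budget-$k$ bribery other than an exact cover can make $p$ a winner.
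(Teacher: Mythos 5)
Your construction is essentially identical to the paper's: the guards $g_1,g_2$ play exactly the role of the paper's dummies $p_1,p_2$, the voter collection ($k+4$ $p$-vetoers plus one template voter per triple, bribe limit $k$) is the same, and the forward direction matches verbatim. Your explicit counting argument for the converse (forcing $r \ge k$, hence $r=k$ with every bribe turning a $p$-vetoer into a triple vote, and then each element vetoed exactly four times) is a correct and welcome fleshing-out of the step the paper compresses into ``only the voters vetoing $p$ should be bribed \ldots\ it is easy to see.''
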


\def\threevetobribery{%
\begin{proof}
Let $X$, $Y$, and $Z$ be pairwise disjoint sets such that $\|X\|=\|Y\|=\|Z\|=k$, and ${\cal M} \subseteq X \times Y \times Z$ with ${\cal M} = \{M_1, \dots, M_{3k}\}$ be an instance of 3DM where each element $a \in X \cup Y \cup Z$ appears in exactly three triples.
We construct an instance of conformant bribery as follows.

Let the candidate set $C = \{p\} \cup X \cup Y \cup Z \cup
\{p_1,p_2\}$. Let $p$ be the preferred candidate and let $k$ be the bribe limit.
Let there be the following voters.
\begin{itemize}
    \item For each $M_i = (x,y,z)$,
    \begin{itemize}
        \item One voter voting $\cdots > x > y > z$
    \end{itemize}
    \item $k+4$ voters voting $\cdots > p > p_1 > p_2$
\end{itemize}
We view the corresponding scoring vector
for $3$-veto as $\langle 0, \dots, 0, -1, -1, -1\rangle$ to
make our argument more straightforward. And so,
before bribery $\score{p} = \score{p_1} = \score{p_2} = -k-4$ and for each $a \in X \cup Y \cup Z$, $\score{a} = -3$.

If there exists a matching ${\cal M}' \subseteq {\cal M}$
of size $k$, for each $M_i \in {\cal M}'$ such that $M_i = (x,y,z)$ we
can bribe one of the voters voting $\cdots > p > p_1 > p_2$ to vote $\cdots > x > y > z$. Since ${\cal M}'$
is a matching the score of each candidate $a \in X \cup Y \cup Z$ decreases by 1 to be $-4$, and since
$k$ of the voters vetoing $p$ are bribed the score of $p$ increases by $k$ to $-4$ and $p$ is a winner.

For the converse, suppose there is a successful conformant bribery. Only the voters vetoing $p$ should be bribed,
and so the score of $p$ after bribery is $-4$. The score of each candidate $a \in X \cup Y \cup Z$
must decrease by at least 1, and so it is easy to see that a successful conformant
bribery of at most $k$ voters must correspond to a perfect matching.\end{proof}}

\threevetobribery

We now consider the case where bribery is hard for the standard model, but easy in our conformant model.
Since the briber is restricted to use only votes that appear in the initial election we have the same behavior
as stated in Theorem~\ref{thm:manip-fixed} for conformant manipulation.

\begin{theorem}\label{thm:bribe-fixed}
Conformant Bribery is in \p\ for every voting rule with a polynomial-time winner 
problem when there is a fixed bribe limit.
\end{theorem}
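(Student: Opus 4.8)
The plan is to mirror the argument already sketched for Theorem~\ref{thm:manip-fixed}, since the two statements have essentially the same structure: in both cases the number of distinct strategic votes available is polynomial, and the total number of strategic slots is fixed, so a brute-force search over all assignments runs in polynomial time whenever the winner problem is polynomial-time computable. The key observation is that in conformant bribery the votes to which we may rebribe are exactly the votes appearing in the initial election $(C,V)$, of which there are at most $\|V\|$ distinct preferences.

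First I would fix the bribe limit at a constant $k$. The briber's action consists of two choices: \emph{which} voters to bribe (a subcollection of at most $k$ voters from $V$), and \emph{what} preference to assign to each bribed voter (drawn from the set of distinct votes occurring in $V$). Since there are at most $\|V\|$ voters, the number of subcollections of size at most $k$ is bounded by $\sum_{i=0}^{k}\binom{\|V\|}{i}$, which is $O(\|V\|^{k})$ and hence polynomial for fixed $k$. Likewise, the number of distinct target votes is at most $\|V\|$, so the number of ways to reassign the chosen voters' preferences is at most $\|V\|^{k}$, again polynomial. Multiplying these two polynomial bounds, the total number of bribery outcomes to examine is $O(\|V\|^{2k})$, which is polynomial in the input size because $k$ is a constant.

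Next I would describe the algorithm itself: enumerate every choice of at most $k$ voters to bribe together with every assignment of target votes from $V$ to those voters, form the resulting election, and invoke the polynomial-time winner algorithm for \elec\ to check whether $p$ is a winner. If any outcome makes $p$ a winner, accept; otherwise reject. Correctness is immediate from the problem definition, since this enumeration is exhaustive over all conformant bribes of size at most $k$. The running time is the number of outcomes, $O(\|V\|^{2k})$, times the cost of one winner-determination call, which is polynomial by hypothesis; the product is polynomial.

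There is no real obstacle here beyond bookkeeping; the only point that warrants care is the exponent. One might worry that bounding the choice of bribed voters by $O(\|V\|^{k})$ is loose, but since $k$ is a fixed constant the bound is polynomial regardless, so a precise constant is unnecessary. I would note only that the assumption that \elec\ has a polynomial-time winner problem is essential, exactly as in Theorem~\ref{thm:manip-fixed}, and that the argument does not extend to an unbounded bribe limit because then the number of subcollections grows exponentially in $k$.
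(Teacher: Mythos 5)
Your proposal is correct and matches the paper's argument: the paper proves this theorem by exactly the same brute-force enumeration it invokes for Theorem~\ref{thm:manip-fixed}, noting that the briber may only choose among the polynomially many votes already in $V$, so with a fixed bribe limit all choices of bribed voters and replacement votes can be exhaustively checked using the polynomial-time winner algorithm. Your bookkeeping of the $O(\|V\|^{2k})$ bound is a harmless elaboration of that same idea.
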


There are generally fewer results looking at a fixed bribe limit than there are looking at a fixed number of manipulators.
One example is that for Single Transferable Vote (STV), bribery is \np-complete even when the bribe limit is 1~\cite{xia:c:margin-of-victory}, but
our focus is on scoring rules.
Brelsford at al.~\shortcite{bre-fal-hem-sch-sch:c:approximating-elections} show bribery is
\np-complete for Borda, but do not consider a fixed bribe limit. However, it is easy to adapt the \np-hardness proof for Borda manipulation with two manipulators from Davies et al.~\shortcite{dav-kat-nar-wal-xia:j:borda-manip}. The main idea is to add two voters that are so bad for the preferred candidate that they have to be bribed.

\begin{theorem}\label{thm:bordabribery}
Borda Bribery with a bribe limit of 2 is \np-complete.
\end{theorem}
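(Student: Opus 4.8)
The plan is to prove membership and hardness separately. For membership, recall that Borda is a pure scoring rule, and the excerpt already observes that manipulative-action problems for pure scoring rules lie in \np; fixing the bribe limit to $2$ only shrinks the search space, so Borda Bribery with a bribe limit of $2$ is in \np. (An \np\ witness is the choice of at most two voters to rebribe together with their new preference orders, after which the winner is computed in polynomial time.)

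For hardness I would reduce from Borda manipulation with two manipulators, which is \np-complete~\cite{dav-kat-nar-wal-xia:j:borda-manip,bet-nei-woe:c:board-manip}. Given a manipulation instance consisting of an election $(C,V)$, preferred candidate $p$, and two manipulators, I would build a bribery instance over the same candidate set $C$, with preferred candidate $p$, bribe limit $2$, and voter collection $V \cup \{u_1,u_2\}$. The two new ``poison'' voters $u_1,u_2$ both rank $p$ in last place, and on $C\setminus\{p\}$ they are arranged so that bribing \emph{both} of them—say resetting them to votes $w_1$ and $w_2$—yields exactly the election $(C, V\cup\{w_1,w_2\})$, i.e., the original manipulation instance in which the two manipulators cast $w_1$ and $w_2$. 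With this correspondence the forward direction is immediate: from a successful manipulation with manipulator votes $w_1,w_2$, bribe $u_1\mapsto w_1$ and $u_2\mapsto w_2$, making $p$ a winner.

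For the converse I must show that \emph{every} successful bribery using at most two bribes in fact bribes both poison voters; reading off their two new votes then gives a successful manipulation. This is the crux and the step I expect to be the main obstacle: I need to guarantee that leaving either $u_1$ or $u_2$ unbribed dooms $p$ \emph{no matter how the one remaining bribe is spent}, so that spending a bribe on a voter of $V$ in place of a poison voter is never useful. The danger is precisely that a single remaining bribe can do double duty—raising $p$ while lowering a rival—so the inequality cannot be left to chance.

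To force it I would make the poison voters genuinely fatal for $p$ by building in a common ``guard'' candidate $g$ that both $u_1$ and $u_2$ rank at the top (with the remaining candidates arranged as a reverse pair so that the clean correspondence above is preserved for the candidates that matter). Following the excerpt's suggestion, I would adapt the Davies et al. construction directly rather than treat manipulation as a black box, since that gives full control over the baseline score of $g$ from $V$: I make it harmless in the manipulation instance yet large enough that a single \emph{surviving} poison voter keeps $g$ strictly above the best score $p$ can attain with one additional bribe. Because the margins in that construction are tight—in a yes-instance $p$ only just ties its strongest rival—any points handed to $g$ and withheld from $p$ by a residual poison voter are unrecoverable with the lone bribe that is left. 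Bounding $p$'s best achievable score against $g$'s guaranteed score whenever a poison voter survives is the routine but delicate calculation that closes the reverse direction and completes the reduction.
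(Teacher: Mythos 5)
Your reduction skeleton---add two ``poison'' voters ranking $p$ last to a hard Borda two-manipulator instance and argue that any successful bribery of size $2$ must bribe exactly those two voters---is the same as the paper's, and your membership argument is fine. But the converse direction, which you correctly identify as the crux, is left as a plan rather than a proof, and the plan as sketched has a genuine gap. You propose to force the bribes onto the poison voters by adding a \emph{new} guard candidate $g$, ranked first by both poison voters, whose baseline score from $V$ is tuned to be fatal whenever a poison voter survives. Two concrete obstacles are unresolved. First, inserting a new candidate into a Borda election perturbs the existing scores: a candidate inserted at position $j$ of a vote adds one point to every candidate ranked above position $j$ in that vote (the scoring vector gets longer), so the delicately balanced Davies et al.\ instance---where $p$ trails the leader by \emph{exactly} twice the maximum per-vote score---is destroyed unless the entire construction is re-engineered around $g$; you acknowledge this but do not carry it out. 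Second, the quantitative window you need is essentially a single integer: writing $M$ for the maximum per-vote score after adding $g$, the yes-case (both poison voters bribed) requires $\score{g}-\score{p}\le 2M$, while fatality of one surviving poison voter against a bribe spent on a vote of $V$ (which can simultaneously strip $g$'s points from that vote and hand $p$ the maximum $M$) forces $\score{g}-\score{p}$ to exceed roughly $2M-1$. Hitting this exactly needs a divisibility/padding argument, on top of the non-perturbation issue. The calculation you defer as ``routine but delicate'' is in fact the entire content of the theorem.

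The idea you are missing is that no new candidate is needed: the Davies et al.\ instance already contains the guard. There, with $q+3$ candidates, the candidate $a_{q+1}$ scores exactly $2(q+2)$ more than $p$, where $q+2$ is the maximum per-vote Borda score. Take the poison voters to vote $a_{q+1} > \cdots > p$; after adding them, $p$ trails $a_{q+1}$ by exactly $4(q+2)$. A single bribe changes the difference $\score{p}-\score{a_{q+1}}$ by at most $2(q+2)$, with equality only if the bribed vote ranks $a_{q+1}$ first and $p$ last and its replacement ranks $p$ first and $a_{q+1}$ last. So both bribes must land on votes with $p$ last, and since one can ensure that $p$ is never last in any vote of $V$, those are precisely the two added voters. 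The post-bribery election is then literally $V$ plus two freely chosen votes, i.e., the manipulation instance, and the equivalence follows in a few lines. This exact-counting argument, using $a_{q+1}$ as the built-in guard, is what closes the converse without any new candidate, re-tuning, or padding.
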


\def\bordabribery{%
\begin{proof}
We need to following properties of the instance of Borda Manipulation with two manipulators constructed in Davies et al.~\shortcite{dav-kat-nar-wal-xia:j:borda-manip}. (For this
proof we use the notation from Davies et al.~\cite{dav-kat-nar-wal-xia:j:borda-manip}.)
The constructed election has a collection of (nonmanipulative) voters $V$ and $q+3$ candidates. Preferred candidate $p$ scores $C$ and candidate $a_{q+1}$ scores $2(q+2)+C$. This implies that in order for $p$ to be a winner, the two manipulators must vote $p$ first and $a_{q+1}$ last. We add two voters voting $a_{q+1} > \cdots > p$. Note that such votes are very bad for $p$ and that we have to bribe two voters voting $a_{q+1} > \cdots > p$. In order to ensure that we bribe exactly the two added voters, it suffices to observe that we can ensure in the construction from Davies et al.~\shortcite{dav-kat-nar-wal-xia:j:borda-manip} that $p$ is never last in any vote in $V$. So, $p$ can be made a winner by bribing two voters in $V \cup \{a_{q+1} > \cdots > p, a_{q+1} > \cdots > p\}$ if and only if $p$ can be made a winner by two manipulators in the election constructed by Davies et al.~\shortcite{dav-kat-nar-wal-xia:j:borda-manip}.\end{proof}
}

\bordabribery

\begin{corollary}
\label{cor:bribe-vs-cb}
For Borda, Bribery is \np-complete with a bribe limit of 2, but Conformant Bribery is in \p\ with a bribe limit of 2.
\end{corollary}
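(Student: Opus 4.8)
The plan is to derive the corollary directly from the two results that immediately precede it, since it is nothing more than their conjunction specialized to Borda. The hardness half --- that Borda Bribery with a bribe limit of 2 is \np-complete --- is exactly the statement of Theorem~\ref{thm:bordabribery}, so I would simply cite it; membership in \np\ is clear because one can guess which at most two voters to bribe together with their new votes and then verify in polynomial time whether $p$ wins.

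For the easiness half I would appeal to Theorem~\ref{thm:bribe-fixed}, which states that Conformant Bribery is in \p\ for any voting rule whose winner problem is polynomial-time computable, provided the bribe limit is fixed. The only hypothesis left to check is that Borda's winner determination lies in \p, and this is routine: given an $m$-candidate election, each candidate's Borda score is obtained by summing, over all voters, the value $m-i$ at the position $i$ at which that candidate is ranked, which takes time polynomial in the input size, and comparing the resulting scores identifies the winner(s). Since the bribe limit $2$ is fixed, Theorem~\ref{thm:bribe-fixed} then immediately yields that Borda Conformant Bribery with a bribe limit of 2 is in \p.

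For concreteness I would also recall why the underlying fixed-limit algorithm runs in polynomial time, mirroring the reasoning behind Theorem~\ref{thm:manip-fixed}: writing $n$ for the number of voters, there are at most $\binom{n}{2}$ choices of which voters to bribe and at most $n^2$ choices of replacement votes drawn from the at most $n$ distinct votes already present in the election, so the brute-force search ranges over $O(n^4)$ possibilities, each tested with the polynomial-time Borda winner routine. There is no genuine obstacle here: the corollary is essentially a packaging of Theorems~\ref{thm:bordabribery} and~\ref{thm:bribe-fixed}, and the only ingredient that might look like it needs work, the polynomial-time Borda winner problem, is entirely standard.
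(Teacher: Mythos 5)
Your proposal is correct and matches the paper's intent exactly: the corollary is stated without a separate proof precisely because it is the conjunction of Theorem~\ref{thm:bordabribery} (hardness of Borda bribery with limit 2) and Theorem~\ref{thm:bribe-fixed} (conformant bribery is in \p\ for any fixed bribe limit and polynomial-time winner problem), which is exactly the packaging you give. Your added details --- the polynomial-time Borda winner computation and the $O(n^4)$ brute-force count over choices of bribed voters and replacement votes drawn from $V$ --- are accurate and only make explicit what the paper leaves implicit.
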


Bribery can be thought of as control by deleting voters followed by manipulation.
For conformant bribery we can see that the same will hold, just with conformant manipulation.
However, we also have a correspondence to the problem of control by replacing voters introduced by Loreggia et
al.~\shortcite{lor-nar-ros-bre-wal:c:replacing-control}. Control by replacing voters asks when given an election, a collection of
unregistered voters, parameter $k$, and preferred candidate $p$, if $p$ can be made a winner by replacing at most $k$ voters in
the given election with a subcollection of the unregistered voters of the same size.
It is straightforward to reduce conformant bribery to control by replacing voters (for each 
original voter $v$, we have a registered copy of $v$ and $k$ unregistered copies of $v$), and so we inherit polynomial-time results from this setting.

\begin{theorem}\label{thm:bribery-to-replacing}
Conformant Bribery polynomial-time many-one reduces to
Control by Replacing Voters.
\end{theorem}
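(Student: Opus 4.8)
The plan is to follow the reduction sketched in the paragraph preceding the statement. Given a \elec-Conformant Bribery instance consisting of an election $(C,V)$, bribe limit $k$, and preferred candidate $p$, I would build a Control by Replacing Voters instance over the same candidate set $C$, with the same preferred candidate $p$ and the same replacement bound $k$. The registered voters are exactly the voters of $V$ (one copy of each), and for each voter $v \in V$ I would put $k$ unregistered voters casting $v$'s vote into the unregistered pool. Since a conformant bribery never needs to change more than $k$ voters and $k$ may be capped at $\|V\|$, this construction has size polynomial in the input and is clearly computable in polynomial time, and it maps a single instance to a single instance, so it is a many-one reduction.

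For the forward direction I would argue as follows. Suppose a conformant bribery succeeds by re-setting the votes of some subcollection $S \subseteq V$ with $\|S\| = k' \le k$ to votes $w_1,\dots,w_{k'}$, each of which already occurs in $V$. In the replacing instance I delete the $k'$ registered voters corresponding to $S$ and add, from the unregistered pool, one voter casting each of $w_1,\dots,w_{k'}$. Because the pool contains at least $k \ge k'$ copies of every vote appearing in $V$, enough copies are available even if several of the $w_i$ coincide (no single vote is needed more than $k'$ times). The resulting election is exactly $(C, (V \setminus S) \cup \{w_1,\dots,w_{k'}\})$, the same election the bribery produces, so $p$ remains an \elec-winner.

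For the converse, suppose $p$ can be made a winner by replacing at most $k$ registered voters with the same number of unregistered voters; say $k'$ registered voters are removed and $k'$ unregistered voters are added. By construction every added unregistered voter casts a vote that already appears in $V$, so I can mimic the replacement by a conformant bribery that changes the votes of the $k'$ removed registered voters to the $k'$ votes cast by the added unregistered voters. This is a legal conformant bribery of $k' \le k$ voters since the target votes all occur in $V$, and the two elections again coincide, so $p$ wins.

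The construction is entirely routine; the only point requiring care is the bookkeeping that aligns the ``at most $k$, same size'' condition of replacing voters with the ``at most $k$'' condition of conformant bribery. Specifically, I would verify that $k$ unregistered copies per voter always suffice---which holds because at most $k$ voters are ever changed, so no vote pattern is requested more than $k$ times---and that deleting and adding equal numbers of voters exactly reproduces the effect of re-setting that many votes. I do not expect any genuine obstacle beyond this matching of the two problems' parameters.
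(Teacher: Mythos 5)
Your construction is exactly the paper's reduction (the paper states it parenthetically: ``for each original voter $v$, we have a registered copy of $v$ and $k$ unregistered copies of $v$''), and your verification of both directions is correct. The proposal matches the paper's approach, just spelled out in more detail.
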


It's natural to ask if there is a setting where conformant bribery is easy, but control by replacing voters is hard, and we show in the appendix
that this is
in fact the case.

\begin{theorem}\label{thm:ccrv-vs-bribe}
There exists a voting rule where
Control by Replacing Voters is \np-complete, but Conformant Bribery is in \p.
\end{theorem}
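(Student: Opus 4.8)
The plan is to exhibit a single (artificial) voting rule $f$ that pulls the two problems apart, exactly in the spirit of the manipulation analogue (Theorem~\ref{t:ccav-vs-cm}). The separation will exploit the one structural difference between the two actions: in Control by Replacing Voters the chair may import \emph{arbitrary} votes from the unregistered pool, whereas Conformant Bribery is confined to vote-types that \emph{already occur} in the election. I would design $f$ so that the votes that help $p$ can only be supplied from the outside pool, and so that Conformant Bribery is provably powerless for $f$ on \emph{every} instance.

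Concretely, I would make $f$ depend only on the \emph{set} of distinct votes present, through a monotone (upward-closed) condition. Say a vote \emph{covers} each candidate it ranks among its top three positions, and let $f(C,V)=C$ if every candidate in $C$ is covered by at least one vote of $V$, and $f(C,V)=\emptyset$ otherwise. Winner determination is clearly polynomial, so both problems lie in \np. The key observation is that Conformant Bribery can only keep or shrink the set of distinct vote-types present (it may delete a type by rebribing its last voter, but it can never create one), while the covering condition is monotone increasing in that set. Hence no conformant bribe can turn a non-winning $p$ into a winner: $p$ wins after bribing if and only if $p$ already wins with zero bribes. Conformant Bribery for $f$ therefore collapses to a single winner query and is in \p.

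For hardness of Control by Replacing Voters I would reduce from 3DM with each element in exactly three triples. Place the $3k$ elements of $X\cup Y\cup Z$ among the candidates, add a preferred candidate $p$ and a few auxiliary candidates, and let the unregistered pool contain, for each triple $(x,y,z)$, one voter whose top three positions are exactly $x,y,z$. Choose the registered voters so that (i) none of them covers any element and (ii) the coverage of $p$ and of the auxiliary candidates is robust to deletions --- e.g.\ include the relevant coverage votes with multiplicity $k+1$ so that removing any $k$ registered voters still leaves a copy. Set the replacement bound to $k$. Since the element candidates can be covered only by triple-voters, and each triple-voter covers exactly three of the $3k$ elements, covering all of them using the $\le k$ voters a replacement can import forces exactly $k$ pairwise-disjoint triples, i.e.\ an exact cover. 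Thus $p$ can be made a winner by replacing at most $k$ voters iff the 3DM instance has a matching, giving \np-completeness.

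The main obstacle is producing \emph{one} rule that is provably easy for bribery on \emph{all} inputs while still being hard for replacement; the two requirements pull against each other. The monotonicity design is what resolves this: upward-closedness of the winning condition, together with the fact that conformant bribery can only shrink the present vote-set, makes bribery universally useless, while the replacement action --- able to inject brand-new triple-voters --- retains the full combinatorial power of 3DM. The remaining care is pure bookkeeping: isolating element-coverage as the sole hard component by padding so that $p$ and the auxiliary candidates stay covered no matter which $k$ registered voters are removed. Combined with Theorem~\ref{thm:bribery-to-replacing}, this shows there is no reduction from Control by Replacing Voters to Conformant Bribery unless $\p=\np$.
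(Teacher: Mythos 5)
Your proposal is correct, but it takes a genuinely different route from the paper. The paper reuses the rule from Theorem~\ref{t:ccav-vs-cm} (``$c$ wins if $c$ is first at least twice and no one is first more often; otherwise, if everyone is first at most once, score by $\langle m+3,0,\ldots,0,-1,-1,-1\rangle$''): there, conformant bribery is easy but \emph{non-vacuously} so --- one argues that bribing voters to put $p$ first is optimal, which lands the election in the plurality-like case and admits a greedy algorithm --- while hardness of replacing voters comes from padding Lin's 3-veto control-by-adding-voters reduction, using duplicated registered votes so that the ``everyone first at most once'' condition pins down exactly which voters must be replaced. You instead design an all-or-nothing coverage rule that depends only on the \emph{set} of distinct vote types and is upward-closed in that set; since conformant bribery can never introduce a vote type absent from the original election, it is provably useless on every instance, collapsing the bribery problem to a single winner-determination query, and hardness of replacing voters follows from a clean 3DM covering argument (at most $k$ imported voters, each covering three of the $3k$ element candidates, must cover all of them, forcing a perfect matching; the $k+1$ duplicate registered votes keep $p$ and the auxiliaries covered). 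Your approach is more modular and the easiness direction is essentially immediate, whereas the paper's buys uniformity (one rule serves both Theorems~\ref{t:ccav-vs-cm} and~\ref{thm:ccrv-vs-bribe}) and a less degenerate separation. One caveat worth acknowledging: the paper explicitly tries to avoid examples that ``abuse the model'' by making the conformant action trivially powerless, and your rule makes conformant bribery powerless on \emph{every} instance (indeed $p$ plays no role at all, since the winner set is always $C$ or $\emptyset$). This is perfectly legal for the theorem as stated --- it asserts only the existence of \emph{some} voting rule --- but it is exactly the kind of degenerate witness the authors chose to steer away from, so your proof, while valid, establishes the statement in a weaker spirit than the paper's.
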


\def\ccavversusbribery{%
\begin{proof}
We will use the voting rule defined in the proof of Theorem~\ref{t:ccav-vs-cm}.

We first show that conformant bribery is in P. If $k=0$ or if there is at most one voter,
$p$ can be made a winner if and only if $p$ is already a winner.
If no voter has $p$ first, then the score of $p$ after bribery is 0.
The total score over all candidates is $n((m+3)-3)=nm$, where $n$ is the number of voters, and
so there is a candidate scoring at least $n > 0$ and so $p$ is not a winner.
If there is a voter that has $p$ first, $k > 0$, and there is more than one voter, we will bribe voters to have $p$ first. 
Since p will be first at least twice, we are in case (1), which is basically plurality and so conformant bribery is in \p\ (greedily bribe a voter voting for a highest-scoring candidate that is not $p$ to vote for $p$ until we have bribed $k$ voters or until all voters vote for $p$).

To show that control by replacing voters is hard, we pad the \np-hardness reduction for 3-veto control by adding voters from
Lin~\shortcite{lin:c:manip-k-app}.

Let the pairwise disjoint sets $X = \{x_1, \ldots, x_k\}$, $Y = \{y_1, \ldots, y_k\}$, and $Z = \{z_1 \ldots, z_k\}$, and ${\cal M} \subseteq X \times Y \times Z$ with ${\cal M} = \{M_1, \dots, M_{3k}\}$ be an instance of 3DM.
We
construct an instance of control by replacing voters as follows. The candidates are
$p$, every element of $X \cup Y \cup Z$, and dummies $d_1, \ldots, d_{3k}$. Without loss of generality, assume that $k \geq 1$. Let $m$ be the number of candidates.
We have the following $4k+1$ registered voters. 
\begin{itemize}
\item 
$p > \cdots > d_1 > d_2 > d_3$.
\item two voters that vote
$x_i > \cdots > d_{3i-2} > d_{3i-1} > d_{3i}$ for $1 \leq i \leq k$.
\item
$y_i > \cdots > d_{3i-2} > d_{3i-1} > d_{3i}$ for $1 \leq i \leq k$.
\item
$z_i > \cdots > d_{3i-2} > d_{3i-1} > d_{3i}$ for $1 \leq i < k$.
\item $z_k > \cdots > d_1 > d_2 > p$.
\end{itemize}
We have the following $3k$ unregistered voters. For $M_i = (x,y,z)$, one voter voting $d_i > \cdots > x > y > z$. Our bound is $k$.

Note that there is only one voter (registered or unregistered) with $p$ first and so in order for $p$ to be a winner, each candidate can be first at most once. This implies that we need to replace a voter that has $x_i$ first for each $x_i$. Since the replacing
limit is $k$ and for each $i, 1 \leq i \leq k$, the two votes that have $x_i$ first are identical, this fixes the votes that need to be replaced.

We will now look at the scores of the candidates from the registered voters after deleting the $k$ votes that need to be replaced.

The score of $p$ is $m+2$, the score of each $a \in X \cup Y \cup Z$ is $m+3$, and the score of each $d_i$ is less than $-1$. Note that we are always in case 2, since no candidate will be first more than once.
If there is a matching of size $k$, adding the voters corresponding to the matching (as replacements for the deleted voters) will make $p$ a winner. And to make $p$ a winner we need to make sure that the score of every candidate in $X \cup Y \cup Z$ goes down by at least 1 point. This implies that the replacement voters correspond to a cover. And since we add $k$ replacement voters to make this happen, this is a perfect matching.\end{proof}
}

In the related work on control by replacing voters,
only the complexity for 2-approval remained open (see~Erd{\'e}lyi et al.~\shortcite{erd-nev-reg-rot-yan-zor:j:towards-completing}). This was recently
shown to be in \p\ by Fitzsimmons and Hemaspaandra~\cite{fit-hem:t:weighted-matching}.
This result immediately implies that
conformant bribery for 2-approval is also in \p.

\begin{theorem}\label{thm:2-app-conformant-bribery}
2-Approval Conformant Bribery is in \p.
\end{theorem}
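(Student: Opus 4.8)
The plan is to obtain membership in \p\ by composing two results already in hand rather than by constructing a fresh algorithm from scratch. The key observation is that the reduction underlying Theorem~\ref{thm:bribery-to-replacing} is \emph{rule-preserving}: its specialization to 2-approval maps 2-Approval Conformant Bribery to 2-Approval Control by Replacing Voters. Since the latter was recently shown to be in \p\ by Fitzsimmons and Hemaspaandra~\cite{fit-hem:t:weighted-matching} (via weighted matching), membership of 2-Approval Conformant Bribery in \p\ follows immediately by transitivity of polynomial-time many-one reductions.

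First I would recall the reduction behind Theorem~\ref{thm:bribery-to-replacing}: given a conformant bribery instance with election $(C,V)$, bribe limit $k$, and preferred candidate $p$, create for each voter $v \in V$ one registered copy of $v$ together with $k$ unregistered copies of $v$, leave $C$ and $p$ unchanged, and set the replacing limit to $k$. I would then verify the two directions of the correspondence: bribing a voter $v$ to cast the vote of some $w \in V$ is simulated by deleting the registered copy of $v$ and adding an unregistered copy of $w$ (of which $k \ge 1$ are available), and conversely any replacement of $j \le k$ registered voters by $j$ unregistered voters yields a bribe of $j \le k$ voters to preferences occurring in $V$. The ``same size'' constraint of control by replacing voters thus matches the ``at most $k$'' bribe budget exactly, so $p$ can be made a winner in one instance if and only if it can in the other.

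The only point that needs care is confirming that the produced instance is genuinely a 2-approval instance, i.e., that the reduction does not alter the voting rule. Since the candidate set $C$ is untouched and every copy is a verbatim repetition of an existing vote over $C$, the number of candidates, and hence the 2-approval scoring vector $\langle 1,1,0,\dots,0\rangle$, is identical in the two instances; for a pure scoring rule the scoring vector depends only on $\|C\|$, so 2-approval maps to 2-approval. I expect this to be the only (and minor) obstacle. All the genuine combinatorial difficulty---handling the two approvals per vote through a weighted-matching formulation---is already absorbed into the cited \p\ result for 2-approval control by replacing voters, so no further algorithmic work is required here.
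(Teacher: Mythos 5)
Your proof is correct and takes essentially the same route as the paper: the paper also obtains this theorem immediately by composing the copy-based reduction of Theorem~\ref{thm:bribery-to-replacing} (which leaves the candidate set, and hence the rule 2-approval, unchanged) with the \p\ result for 2-approval control by replacing voters from~\cite{fit-hem:t:weighted-matching}. Your explicit verification of the two directions of the reduction and of rule-preservation just spells out what the paper treats as immediate.
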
 

2-approval appears right at the edge of what is easy. 
For 3-approval, control by deleting voters and bribery are hard~\cite{lin:c:manip-k-app},
control by replacing voters is hard~\cite{erd-nev-reg-rot-yan-zor:j:towards-completing}, and we show in the appendix
that conformant bribery is hard as well (recall that for 3-approval, conformant manipulation (Theorem~\ref{thm:xccav-vs-manip}) and control by adding voters~\cite{lin:c:manip-k-app} are easy, essentially because all we are doing is ``adding'' votes that approve $p$).

\begin{theorem}\label{thm:3app-conform-bribery}
3-Approval Conformant Bribery is \np-complete.
\end{theorem}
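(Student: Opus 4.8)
The plan is to get membership in \np\ for free---conformant bribery for a pure scoring rule is in \np, since one can guess the at most $k$ voters to rebribe together with their new (already-occurring) votes and then check in polynomial time that $p$ wins---and to prove \np-hardness by a direct reduction from the restricted version of 3DM, dualizing the construction behind Theorem~\ref{thm:3veto-conformbribery} (3-Veto Conformant Bribery). Where the 3-veto proof lowers element scores by \emph{adding} vetoes, here I would lower them by \emph{removing approvals}, which is exactly what separates conformant bribery (votes are replaced) from conformant manipulation (only $p$-approving votes are added, which is why 3-approval is easy by Theorem~\ref{thm:xccav-vs-manip}).

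For the construction, given a 3DM instance in which each element appears in exactly three triples (so $\|{\cal M}\| = 3k$), let the candidates be $p$, the $3k$ elements of $X \cup Y \cup Z$, and two dummy ``friends'' $f_1,f_2$. For each triple $(x,y,z)\in{\cal M}$ I would include $c$ identical \emph{triple-voters} $x>y>z>\cdots$ (approving exactly $x,y,z$), and include $3c-1-k$ \emph{$p$-voters} $p>f_1>f_2>\cdots$ (approving exactly $p,f_1,f_2$), choosing $c$ large enough that this count is positive, e.g.\ $c=k$. The bribe limit is $k$. Then each element has initial score $3c$ while $\score{p}=\score{f_1}=\score{f_2}=3c-1-k$. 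The only move that matters is replacing a triple-voter by a $p$-vote: this removes one approval from each of its three elements and adds one approval to $p$ (and to $f_1,f_2$), so it does ``double duty.''

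The intended equivalence is: replacing, for each triple of a size-$k$ matching, one triple-voter by a $p$-vote raises $\score{p}$ to $3c-1$ and drops every element exactly once, to $3c-1$, so $p$ ties for the win; conversely, any successful bribery must be of this form. The key step is an aggregate counting argument. Since $p$ gains at most one point per bribe, $\score{p}\le 3c-1$, so every element must lose at least one approval; a single inequality comparing the total element score after bribery against the per-element bound $\score{p}$ (using that a bribe changes the total element score by at most $3$, and that only a replacement of a triple-voter by a $p$-vote both spares $p$ and removes three element approvals) forces the number $a$ of such replacements to satisfy $a\ge k$. With at most $k$ bribes this gives $a=k$, the entire budget is spent on replacing distinct triple-voters by $p$-votes, and these $k$ triples cover each of the $3k$ elements exactly once---a perfect matching.

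The main obstacle is making the reduction \emph{tight} so that $p$ cannot win trivially, and this is what drives every design choice. First, because each element-lowering bribe also raises $p$, I must place $\score{p}$ exactly $k$ below the post-matching threshold (hence $3c-1-k$), so the whole budget is forced into productive, covering bribes rather than being spent cheaply on boosting $p$; this is the analogue of the $-k-4$ starting score in the 3-veto proof. Second, pushing the element scores above $p$'s reach requires padding, and naive padding---say extra voters that partition the elements into fixed approval-triples---would hand the briber a spurious cover independent of ${\cal M}$; taking $c$ \emph{copies of the genuine triple-votes} avoids this, since then every element-approving vote corresponds to a triple of ${\cal M}$ and any exact cover of the $3k$ elements by $k$ such triples is necessarily a matching. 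Checking that no other target (another triple-vote, or a $p$-vote used as a source) can do better is then routine and is absorbed into the counting inequality above.
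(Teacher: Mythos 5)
Your proof is correct, and it follows the paper's high-level strategy---a reduction from 3DM restricted so every element is in exactly three triples, with triple-votes, $p$-votes shielded by two dummies, and bribes that must ``do double duty'' (raise $p$ by one while stripping approvals from three elements)---but the construction and the converse argument differ in substance. The paper keeps exactly \emph{one} $p$-vote and pads element scores up to $k+2$ with $3k(k-1)$ extra votes, each approving a single element and two of $6k(k-1)$ fresh dummy candidates; since $p$ starts at score $1$ and can reach at most $k+1$, every one of the $k$ bribes is immediately forced to install the unique $p$-vote, and then covering all $3k$ elements with only $k$ bribed source votes forces those sources to be triple-votes forming a perfect matching. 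You instead pad by taking $c=k$ copies of each triple-vote and $3c-1-k$ copies of the $p$-vote, and your converse runs through an aggregate inequality (summing element scores against the bound $3k\cdot\score{p}$) that simultaneously rules out triple-to-triple and $p$-to-triple replacements; the algebra indeed yields $a-e\ge k$, hence $a=k$, $b=e=0$, and the exact-cover conclusion. Your version buys a leaner candidate set (no $\Theta(k^2)$ dummies and no padding-vote bookkeeping, at the mild cost of duplicated votes, which the conformant model permits) and a uniform treatment of all bribe types; the paper's version buys a converse that needs no algebra because the single $p$-vote plus the score gap of $k+1$ pins down the bribery pattern directly. One point worth making explicit if you write this up: with multiple copies of each triple-vote, the $k$ bribed sources could a priori include two copies of the same triple, so you should state (as your counting implicitly shows) that this would leave some element uncovered, forcing the $k$ sources to be copies of $k$ \emph{distinct} triples.
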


\def\3approvalconformbribery{%
\begin{proof}
Let pairwise disjoint sets $X = \{x_1, \dots, x_k\}$, $Y = \{y_1, \dots, y_k\}$, and $Z = \{z_1, \dots, z_k\}$, and $M \subseteq X \times Y \times Z$ with ${\cal M} = \{M_1, \dots, M_{3k}\}$ be an instance of 3-Dimensional Matching where each element $a \in X \cup Y \cup Z$ appears in exactly three triples.
We construct an instance of conformant bribery as follows.

Let the candidate set $C = \{p\} \cup X \cup Y \cup Z \cup
\{p_1,p_2\} \cup \{d_1,\dots,d_{6k(k-1)}\}$. Let $p$ be the preferred candidate and let $k$ be the bribe limit.
Let there be the following voters.
\begin{itemize}
    \item For each $M_i = (x,y,z)$,
    \begin{itemize}
        \item One voter voting $x > y > z > \cdots$
    \end{itemize}
    \item One voter voting $p > p_1 > p_2 > \cdots$
\end{itemize}
From the votes above each candidate in $X \cup Y \cup Z$
has a score of 3, and $p$ and the two dummy candidates $p_1$
and $p_2$ have a score of 1. The vote that approves of $p$
will be the one used by the briber to set the bribed votes.
We now pad the election with $3k(k-1)$ additional votes such that for
each $a \in X \cup Y \cup Z$ there are $k-1$ votes that approve
of $a$ 
and two of the dummy candidates $d$ (two different dummies for each vote).
After this additional padding the score of each of the candidates
in $X \cup Y \cup Z$ is $k+2$.

If there is a perfect matching ${\cal M}' \subseteq {\cal M}$, then for each $M_i = (x,y,z) \in {\cal M}'$ bribe the voters
voting $x > y > z > \cdots$ to vote $p > p_1 > p_2 > \cdots$ The score of each candidate in $X \cup Y \cup Z$ %
decreases by exactly 1 and the score of $p$ increases by $k$. Therefore $p$ is a winner.

Suppose there is a conformant bribery such that $p$ wins. A successful conformant bribery must
replace $k$ of the votes in the election with the vote $p > p_1 > p_2 > \cdots$
Thus the score after bribery for $p$ is $k+1$ and the bribery must decrease the score of each candidate in  $X \cup Y \cup Z$
by at least 1. It is clear that votes of the form $x > y > z > \cdots$ are bribed and that
they correspond to a perfect matching.\end{proof}
}

As a final note, we mention that for first-last, conformant bribery, like conformant manipulation, is equivalent to
exact perfect bipartite matching again showing the unusual
complexity behavior of this rule.

\begin{theorem}
\label{thm:cb-vs-ccav} 
First-Last Conformant Bribery is equivalent to Exact Perfect Bipartite Matching (under dtt reductions).
\end{theorem}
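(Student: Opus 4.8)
The plan is to establish the two directions of the dtt-equivalence separately, exploiting the fact that under first-last only the top and bottom entry of each vote affect any score. First I would normalize the problem: since first-last assigns $+1$ to the first candidate and $-1$ to the last, every vote is, for scoring purposes, just a pair (first candidate, last candidate), and conformant bribery amounts to re-stamping at most $k$ voters with (first, last) pairs that already occur in $V$. The crucial structural observation is that a single bribe is inherently \emph{two-sided}: re-stamping a voter simultaneously adjusts a first-place tally and a last-place tally, both for the vacated pair and for the new pair. It is exactly this coupling of a ``$+1$ side'' and a ``$-1$ side'' that forces a bipartite, and ultimately an \emph{exact}, matching constraint, mirroring the behavior already seen for First-Last Conformant Manipulation (Theorem~\ref{thm:cm-vs-ccav}).

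For the direction First-Last Conformant Bribery dtt-reduces to EPBM, I would fix the final score that $p$ must attain and the exact number $j \leq k$ of voters bribed (both range over polynomially many values), and encode the admissible briberies achieving that profile as a $b$-matching between first-place slots and last-place slots in an edge-weighted multigraph. Plain Max-Weight $b$-Matching for Multigraphs is already in \p\ (Section~\ref{sec:cc}), so the only genuine difficulty is an \emph{exact} count on a particular quantity (e.g.\ the number of bribed votes placing $p$ first) that the two-sidedness forces; I would mark that count with red edges and apply the $b$-matching-to-matching construction cited in Section~\ref{sec:cc} to land in EPBM. Finally I would take the disjunction of the resulting EPBM queries over all profiles $(\text{target score of }p,\ j)$; this disjunction is what makes the reduction dtt rather than many-one.

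For the converse, EPBM dtt-reduces to First-Last Conformant Bribery, given an instance $(G,E',k)$ with $G=(A\cup B,E)$ bipartite I would build a first-last election with one candidate per vertex, the preferred candidate $p$, and a few threshold/padding candidates, whose voters encode the edges so that re-stamping (bribing) a voter selects a matching edge. The bribe limit together with a padding candidate that $p$ must overtake would force the selected edges to cover every vertex exactly once (a perfect matching), while two further threshold candidates, one raised by red edges and one by non-red edges, would pin the number of chosen red edges to be \emph{exactly} $k$: too many red lets one threshold candidate beat $p$, too few lets the other. Thus $p$ can be made a winner iff $G$ has a perfect matching using exactly $k$ red edges.

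The step I expect to be the main obstacle is enforcing genuine \emph{exactness} in both reductions. Because each bribe is two-sided and may use only pairs already present in $V$, one cannot independently tune $p$'s first-place gain and the induced last-place penalties, and a bribery that overshoots $p$'s margin is just as disqualifying as one that undershoots; this is precisely the phenomenon that blocks a collapse to ordinary polynomial-time matching and pins the problem to EPBM. Concretely, in the forward direction I must verify that the exact-count constraint cannot be relaxed to an inequality (else we would land in \p), and in the backward direction I must check that no bribery using fewer than the full budget, or using an ``off-pattern'' existing vote type, can sneak $p$ into a win, which is the usual source of bugs in such gadgets. Bridging the ``at most $k$'' budget of bribery with the ``exactly'' requirement of EPBM is exactly what the dtt disjunction absorbs, which is why Theorem~\ref{thm:cb-vs-ccav} is stated up to dtt reductions.
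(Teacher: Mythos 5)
There is a genuine gap in both directions of your proposed equivalence. For the backward direction (EPBM to First-Last Conformant Bribery), your mechanism for pinning the red-edge count---``two further threshold candidates, one raised by red edges and one by non-red edges''---cannot work under first-last scoring: each vote contributes exactly $+1$ to its first candidate and $-1$ to its last candidate and nothing to anyone else, so a bribed vote that ``selects an edge'' (and hence has vertex/dummy candidates in its first and last positions) cannot simultaneously raise or lower a separate threshold candidate. The paper enforces exactness by a completely different device, inherited from the proof of Theorem~\ref{thm:cm-vs-ccav}: each red edge is encoded by a chain of $n+3$ votes and each non-red edge by a chain of $n+2$ votes, so selecting a red edge consumes exactly one more unit of budget than a non-red edge; covering all vertices then costs $n(n+1)+\ell$ votes where $\ell$ is the number of red edges used, and since the budget is $n(n+1)+k$ with $k<n+2$, the leftover $k-\ell$ votes cannot be spent neutrally, forcing $\ell=k$. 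The bribery version is then obtained from the manipulation version by adding candidates $r$ and $\hat{r}$ and $n(n+1)+k$ voters voting $r > \cdots > \hat{r}$, all of whom must be bribed (otherwise $r$ outscores $p$), so that the bribed voters play exactly the role of the manipulators. Without the chain-length trick (or an equivalent counting device) your gadget does not establish NP$\,$-style exactness, and this is the heart of the theorem, not a detail.

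For the forward direction you also take on far more work than necessary, and the sketch does not discharge it. The paper simply composes Theorem~\ref{thm:bribery-to-replacing} (conformant bribery many-one reduces to control by replacing voters) with the known result of~\cite{fit-hem:c:random-classification} that first-last control by replacing voters is dtt-equivalent to Exact Perfect Bipartite Matching. Your direct route---encode briberies as a $b$-matching between first-place and last-place slots, mark an exact count with red edges, then invoke the $b$-matching-to-matching construction---relies on an unproven claim: the construction cited in Section~\ref{sec:cc} is stated for \emph{Max-Weight} $b$-Matching, and it is not established (and is precisely the nontrivial content of the prior work you would be re-deriving) that it transports an \emph{exact} red-edge constraint into an instance of Exact Perfect Bipartite Matching while preserving bipartiteness and perfectness. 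So the step you flag as ``the main obstacle'' is indeed the obstacle, and the proposal leaves it open in both directions.
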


The proof of the above theorem has been deferred to the appendix.

\section{Conclusion} %

The conformant models of manipulation and bribery
capture a natural setting for election manipulation. We found that there is no reduction
between the standard and conformant models in either direction (unless $\p = \np$), and further explored the connection between these models and types of electoral control.

We found the first trichotomy theorem for scoring rules.
This theorem concerns the problem of exact control by adding voters and
highlights the unusual %
complexity behavior of the scoring rule first-last.
We show that this unusual complexity behavior also occurs for our conformant models.

We also observed interesting behavior for exact variants of control, including a nontrivial case where the complexity of a problem increases when going from the standard to the exact case.

We see several interesting directions for future work. For example, we could look at conformant versions for other 
bribery problems (e.g., priced bribery) or for restricted domains such as single-peakedness. We are also interested in further exploring the complexity landscape of problems for the scoring rule first-last.

\section*{Acknowledgements}

This work was supported in part by grant NSF-DUE-1819546.
We thank the anonymous reviewers for their helpful comments and suggestions.

\newcommand{\etalchar}[1]{$^{#1}$}

\appendix

\section{Appendix} %

\noindent
{\bf Theorem \ref{thm:cm-vs-ccav}.} \
{\em First-Last Conformant Manipulation is equivalent to Exact Perfect Bipartite Matching (under dtt reductions).}

\medskip

\begin{proof}
Conformant manipulation reduces to exact control by adding voters (Theorem~\ref{thm:restricted-to-exact}), which is equivalent to
exact perfect bipartite matching for first-last elections~\cite{fit-hem:c:random-classification} (under dtt reductions). This implies that for first-last, conformant manipulation reduces to exact perfect bipartite matching.

We now show the other direction.
Consider the following instance of exact perfect bipartite matching.
Let $G = (V, E)$ be a bipartite graph where the vertices can be partitioned into the sets $A = \{a_1, \dots, a_n\}$ and $B = \{b_1, \dots, b_n\}$ such that all edges are between vertices in $A$ and $B$, and let $k$ be the desired number of red edges in an exact perfect matching.
We construct an instance of conformant manipulation as follows.

Let the candidate set $C = A \cup B \cup D \cup \{p\}$
where $D$ consists of for each edge $(a_i,b_j)$ in $E(G)$, the dummy candidates $d_1^{ij}, \dots, d_{n+1}^{ij}$.
Let $p$ be the preferred candidate and let there be $n(n+1) + k$ manipulators. Let there be the following nonmanipulators.

\begin{description}
  \item[Vertex Voters:] For each $i, 1 \le i \le n$,
  \begin{itemize}
      \item One voter voting $a_i > \dots > b_i$.
  \end{itemize}

  \item[Edge Voters:] For each edge $(a_i,b_j)$ in $E(G)$,
  \begin{itemize}
      \item If the edge $(a_i,b_j)$ is red we have the following $n+3$ voters.

    \medskip
    \begin{tabular}{ccccc}
    $b_j$ & $>$ & $\cdots$ & $>$ & $d_1^{ij}$\\[2pt]
    $d_1^{ij}$ & $>$ & $\cdots$ & $>$ & $d_2^{ij}$\\[2pt]
    $d_2^{ij}$ & $>$ & $\cdots$ & $>$ & $d_3^{ij}$\\
    && $\vdots$ &&\\
    $d_{n}^{ij}$ & $>$ & $\cdots$ & $>$ &$d_{n+1}^{ij}$\\[2pt]
    $d_{n+1}^{ij}$ & $>$ & $\cdots$ &  $>$ & $a_i$\\[2pt]
    $a_i$ & $>$ & $\cdots$ & $>$ & $b_j$\\[2pt]
    \end{tabular}
    \medskip
    
    \item If the edge $(a_i,b_j)$ is not red we have the following $n+2$ voters.
    
    \medskip
    \begin{tabular}{ccccc}
    $b_j$ & $>$ & $\cdots$ & $>$ & $d_1^{ij}$\\[2pt]
    $d_1^{ij}$ & $>$ & $\cdots$ & $>$ & $d_2^{ij}$\\[2pt]
    $d_2^{ij}$ & $>$ & $\cdots$ & $>$ & $d_3^{ij}$\\
    && $\vdots$ &&\\
    $d_{n-1}^{ij}$ & $>$ & $\cdots$ & $>$ &$d_{n}^{ij}$\\[2pt]
    $d_{n}^{ij}$ & $>$ & $\cdots$ & $>$ & $a_i$\\[2pt]
    $a_i$ & $>$ & $\cdots$ & $>$ & $b_j$\\[2pt]
    \end{tabular}
    \medskip

  \end{itemize}
\end{description}

The vertex voters give each candidate in $B$ a score of $-1$, each candidate in $A$ a score of $1$, and all remaining candidates %
a score of 0, and each set of edge voters gives 0 points to all of the candidates. Since $p$ receives 0 points from each nonmanipulator, for $p$ to be a winner all candidates must have score 0 after
the manipulation.
We will show that there is a perfect bipartite matching
with exactly $k$ red edges if and only if
there is a way to set the votes of the $n(n+1)+k$ manipulators using the votes of the nonmanipulators such that $p$ is a
winner.

First, suppose there exists a perfect bipartite matching with exactly $k$ red edges. For each edge $(a_i,b_j)$ in the matching, have manipulators
cast all edge votes corresponding to the edge except for the vote $a_i > \dots > b_j$, and so the candidate $b_j$ gains one point and $a_i$ loses one point. For each red edge we use $n+2$ manipulators and for each nonred edge we use $n+1$
manipulators, which is a total of
$(n-k)(n+1) + k(n+2) = n(n+1)+k$
manipulators. After manipulation all candidates have
a score of 0 and $p$ is a winner.

For the converse, suppose there exists a successful
conformant manipulation. As mentioned above, all candidates must have a score of 0 after manipulation. Each of the candidates in $B$ must lose a point, each of the candidates in $A$ must gain a point, and each candidate in
$D$ should not gain or lose a point. This implies that for each candidate $b \in B$, there exists a candidate $a \in A$ such that we cast all edge votes corresponding to $(a,b)$ other than $a > \cdots > b$. So we need to cast at least $n(n+1) + \ell$ votes, where $\ell$ is the number of $(a,b)$ that are red. The remaining $k - \ell$ manipulators need to cast votes such that each candidate gets 0 points. Since $k < n+2$, this is only possible if $k - \ell = 0$, i.e., there is a perfect matching with exactly $k$ red edges.
\end{proof}

\bigskip

\noindent
{\bf Theorem \ref{t:ccav-vs-cm}.} \
{\em There exists a voting rule where
Control by Adding Voters is \np-complete, but Conformant Manipulation is in \p.}

\medskip

\begin{proof}
It is not so simple to find
a reasonable example here. For example, if we require that all votes are distinct, conformant manipulation with at least one manipulator will never be possible, but that rather abuses the model. A nontrivial example is the following.

Consider the following voting rule.
$c$ is a winner if and only if (1) $c$ is first at least twice and no other candidate is first more than $c$, or (2) all candidates are first at most once and $c$ is a winner using scoring rule $\langle m+3,0,\ldots, 0, -1, -1, -1\rangle$, where $m$ is the number of candidates. For this rule, Control by Adding Voters is \np-complete, but Conformant Manipulation is in \p.

We first show that conformant manipulation is in \p. If we have no manipulators, we simply evaluate. If we have manipulators but no nonmanipulators, there are no winners. So, assume we have manipulators and nonmanipulators. If there is a nonmanipulator that has $p$ first, then all manipulators should vote $p$ first. Since $p$ will be first at least twice, we are in case (1), which is basically plurality and so conformant manipulation in is P. If no nonmanipulator has $p$ first, then $p$ can not be first in the manipulators either. So, $\score{p}$ is at most 0. The total score over all candidates is $(n+k)((m+3)-3)$, where $n$ is the number of nonmanipulators and $k$ the number of manipulators, and so there is a candidate scoring at least $(n+k) > 0$ and so $p$ is not a winner.

To show that control by adding voters is hard, we pad the \np-hardness reduction for 3-veto control by adding voters from
Lin~\shortcite{lin:c:manip-k-app}.

Let the pairwise disjoint sets $X = \{x_1, \ldots, x_k\}$, $Y = \{y_1, \ldots, y_k\}$, and $Z = \{z_1 \ldots, z_k\}$, and ${\cal M} \subseteq X \times Y \times Z$ with ${\cal M} = \{M_1, \dots, M_{3k}\}$ be an instance of 3DM.
We
construct an instance of control by adding voters as follows. The candidates are
$p$, every element of $X \cup Y \cup Z$, and dummies $d_1, \ldots, d_{3k}$. Without loss of generality, assume that $k \geq 1$. Let $m$ be the number of candidates.
We have the following $3k+1$ registered voters. 
\begin{itemize}
\item 
$p > \cdots > d_1 > d_2 > d_3$.
\item
$x_i > \cdots > d_{3i-2} > d_{3i-1} > d_{3i}$ for $1 \leq i \leq k$.
\item
$y_i > \cdots > d_{3i-2} > d_{3i-1} > d_{3i}$ for $1 \leq i \leq k$.
\item
$z_i > \cdots > d_{3i-2} > d_{3i-1} > d_{3i}$ for $1 \leq i < k$.
\item $z_k > \cdots > d_1 > d_2 > p$.
\end{itemize}
So the score of $p$ is $m+2$, the score of each $a \in X \cup Y \cup Z$ is $m+3$, and the score of each $d_i$ is less than $-1$. Our adding bound is $k$. We have the following $3k$ unregistered voters. For $M_i = (x,y,z)$, one voter voting $d_i > \cdots > x > y > z$.
Note that we are always in case 2, since no candidate will be first more than once.
If there is a matching of size $k$, adding the votes corresponding to the matching will make $p$ a winner. And to make $p$ a winner we need to make sure that the score of every candidate in $X \cup Y \cup Z$ goes down by at least 1 point. This implies that the added votes correspond to a cover. And since we add at most $k$ voters to make this happen, this is a perfect matching.\end{proof}

\bigskip

\noindent
{\bf Theorem~\ref{thm:ccrv-vs-bribe}.} \
{\em There exists a voting rule where
Control by Replacing Voters is \np-complete, but Conformant Bribery is in \p.}

\medskip

\ccavversusbribery

\bigskip

\noindent
{\bf Theorem~\ref{thm:3app-conform-bribery}.} \
{\em 3-Approval Conformant Bribery is \np-complete.}

\medskip

\3approvalconformbribery

\noindent
{\bf Theorem \ref{thm:cb-vs-ccav}.} \
{\em First-Last Conformant Bribery is equivalent to Exact Perfect Bipartite Matching (under dtt reductions).}

\medskip

\begin{proof}
Conformant bribery reduces to control by replacing voters (Theorem~\ref{thm:bribery-to-replacing}), which is equivalent to
exact perfect bipartite matching for first-last elections~\cite{fit-hem:c:random-classification} (under dtt reductions). This implies that for first-last, conformant bribery reduces to exact perfect bipartite matching.

For the reduction from exact perfect bipartite matching, the approach used in the proof of Theorem~\ref{thm:cm-vs-ccav} can be adapted to show the above theorem by adding an additional two candidates $r$
and $\hat{r}$, $n(n+1)+k$ voters voting $r > \dots > \hat{r}$,
and setting the bribe limit to $n(n+1)+k$. All of these new voters must be bribed to vote the corresponding manipulation described in the proof of Theorem~\ref{thm:cm-vs-ccav}. It is
straightforward to see that the reduction holds.
\end{proof}

\end{document}